\newif\ifcommentson\commentsonfalse
\def\mywidth{.9}
\def\mywidthRep{.8}
\newcommand{\commentCP}[1]{\begin{center} \parbox{\mywidth\textwidth}{\textbf{\textcolor{black}{Comment C.}} \textcolor{red}{#1 }}\end{center}}
\newcommand{\commentKC}[1]{\begin{center} \parbox{\mywidth\textwidth}{\textbf{\textcolor{black}{Comment K.}} \textcolor{red}{#1} }\end{center}}
\newcommand{\commentYK}[1]{\begin{center} \parbox{\mywidth\textwidth}{\textbf{\textcolor{black}{Comment Y.}} \textcolor{red}{#1} }\end{center}}
\newcommand{\commentMA}[1]{\begin{center} \parbox{\mywidth\textwidth}{\textbf{\textcolor{black}{Comment M.}} \textcolor{red}{#1} }\end{center}}
\newcommand{\replyCP}[1]{\begin{center} \parbox{\mywidthRep\textwidth}{\textbf{Reply C.} \textcolor{blue}{#1} }\end{center}}
\newcommand{\replyKC}[1]{\begin{center} \parbox{\mywidthRep\textwidth}{\textbf{Reply K.} \textcolor{blue}{#1} }\end{center}}
\newcommand{\replyYK}[1]{\begin{center} \parbox{\mywidthRep\textwidth}{\textbf{Reply Y.} \textcolor{blue}{#1} }\end{center}}
\newcommand{\replyMA}[1]{\begin{center} \parbox{\mywidthRep\textwidth}{\textbf{Reply M.} \textcolor{blue}{#1} }\end{center}}
\newcommand{\commentC}[1]{\marginpar{\footnotesize \color{red} {\bf C:} \textsf{\scriptsize #1}}}
\newcommand{\commentK}[1]{\marginpar{\footnotesize \color{red} {\bf K:} \textsf{\scriptsize #1}}}
\newcommand{\commentY}[1]{\marginpar{\footnotesize \color{red} {\bf Y:} \textsf{\scriptsize #1}}}
\newcommand{\commentM}[1]{\marginpar{\footnotesize \color{red} {\bf M:} \textsf{\scriptsize #1}}}
\newcommand{\replyC}[1]{\marginpar{\footnotesize \color{red} {\bf C:} \textsf{\scriptsize #1}}}
\newcommand{\replyK}[1]{\marginpar{\footnotesize \color{red} {\bf K:} \textsf{\scriptsize #1}}}
\newcommand{\replyY}[1]{\marginpar{\footnotesize \color{red} {\bf Y:} \textsf{\scriptsize #1}}}
\newcommand{\replyM}[1]{\marginpar{\footnotesize \color{red} {\bf M:} \textsf{\scriptsize #1}}}
\newcommand{\commentCP}[1]{}
\newcommand{\commentKC}[1]{}
\newcommand{\commentYK}[1]{}
\newcommand{\commentMA}[1]{}
\newcommand{\replyCP}[1]{}
\newcommand{\replyKC}[1]{}
\newcommand{\replyYK}[1]{}
\newcommand{\replyMA}[1]{}
\newcommand{\commentC}[1]{}
\newcommand{\commentK}[1]{}
\newcommand{\commentY}[1]{}
\newcommand{\commentM}[1]{}
\newcommand{\replyC}[1]{}
\newcommand{\replyK}[1]{}
\newcommand{\replyY}[1]{}
\newcommand{\replyM}[1]{}
\newtheorem{Theorem}{Theorem}
\newtheorem{Definition}[Theorem]{Definition}
\newtheorem{Example}[Theorem]{Example}
\newcommand{\calx}{\mathcal{X}}
\newcommand{\caly}{\mathcal{Y}}
\newcommand{\cala}{\mathcal{A}}
\newcommand{\cald}{\mathcal{D}}
\newcommand{\calad}{\mathcal{A}\,{\rightarrow}\,\mathcal{D}}
\newcommand{\calda}{\mathcal{D}\,{\rightarrow}\,\mathcal{A}}
\newcommand{\cali}{\mathcal{I}}
\newcommand{\calj}{\mathcal{J}}
\newcommand{\reals}{\mathbb{R}}
\newcommand{\distr}{\mathbb{D}}
\newcommand{\infoset}{\mathit{K}}
\newcommand{\infoseta}{\infoset_{\mathsf{a}}}
\newcommand{\psd}{\mathit{s_{\sf d}}}
\newcommand{\psa}{\mathit{s_{\sf a}}}
\newcommand{\msd}{\mathit{\sigma_{\sf d}}}
\newcommand{\msa}{\mathit{\sigma_{\sf a}}}
\newcommand{\payd}{\mathit{u_{\sf d}}}
\newcommand{\paya}{\mathit{u_{\sf a}}}
\newcommand{\pay}{\mathit{u}}
\newcommand{\Payd}{\mathit{U_{\sf d}}}
\newcommand{\Paya}{\mathit{U_{\sf a}}}
\newcommand{\Pay}{\mathit{U}}
\newcommand{\eqdef}{\ensuremath{\stackrel{\mathrm{def}}{=}}}
\newcommand{\argmax}{\operatornamewithlimits{argmax}}
\newcommand{\argmin}{\operatornamewithlimits{argmin}}
\newcommand{\supp}[1]{{\sf supp}(#1)}
\newcommand{\expectDouble}[2]{\operatornamewithlimits{\displaystyle\mathbb{E}}_{\substack{#1\\ #2}}}
\renewcommand{\equiv}{\approx}
\newcommand{\vg}{V_{g}} 
\newcommand{\priorvg}[1]{\vg\left[#1\right]} 
\newcommand{\vf}{\mathbb{V}} 
\newcommand{\priorvf}[1]{\vf\left[#1\right]} 
\newcommand{\postvf}[2]{\vf\left[#1,#2\right]} 
\newcommand{\samplefrom}{\leftarrow} 
\newcommand{\add}{\operatorname{+}}
\newcommand{\bigadd}{\operatorname{\sum}}
\newcommand{\hchoice}[1]{\;{{}_{\mathit{#1}}{\mathlarger{\mathlarger{\oplus}}}}\;}
\newcommand{\HChoice}[2]{{\osum_{\mathit{#1} \samplefrom \mathit{#2}}}}
\newcommand{\HChoiceDouble}[4]{{\osum_{\substack{\mathit{#1} \samplefrom \mathit{#2}\\ \mathit{#3} \samplefrom \mathit{#4}}}}}
\newcommand{\hchoiceop}{\osum} 
\newcommand{\conc}{\diamond} 
\newcommand{\bigconc}{\meddiamond}
\newcommand{\vchoice}[1]{\;{{}_{\mathit{#1}}{\mathlarger{\mathlarger{\sqcupdot}}}}\;}
\newcommand{\VChoice}[2]{{\bigsqcupdot_{\mathit{#1} \samplefrom \mathit{#2}}}}
\newcommand{\VChoiceDouble}[4]{{\bigsqcupdot_{\substack{\mathit{#1} \samplefrom \mathit{#2}\\ \mathit{#3} \samplefrom \mathit{#4}}}}}
\newcommand{\vchoiceop}{\bigsqcupdot} 
\newcommand{\dist}[1]{\mathbb{D}{#1}}
\newcommand{\qm}[1]{``#1''}
\newcommand{\true}{T}
\newcommand{\false}{F}
\title{Leakage and Protocol Composition in a Game-Theoretic Perspective\thanks{This version corresponds to the paper published in the POST 2018 conference proceedings. We suggest the reader to read instead the full version of this paper, available at~\cite{Alvim18:Entropy:arXiv}, also because in this version there are some minor mistakes, which have been corrected in the full version.}}
\titlerunning{Leakage and Protocol Composition in a Game-Theoretic Perspective}
\author{M\'{a}rio S. Alvim\inst{1}
\and Konstantinos Chatzikokolakis\inst{2}
\and Yusuke Kawamoto\inst{3} 
\and Catuscia Palamidessi\inst{4}}
\authorrunning{Alvim et al.}
\institute{Universidade Federal de Minas Gerais, Belo Horizonte, Brazil
\and CNRS and \'{E}cole Polytechnique, Palaiseau, France
\and AIST, Tsukuba, Japan
\and INRIA and \'{E}cole Polytechnique, Palaiseau, France}
\begin{document}

\maketitle

\begin{abstract}
In the inference attacks studied in Quantitative Information Flow (QIF),
the adversary typically tries 
to interfere with the system in the attempt 
to increase its leakage of secret information. 
The defender, on the other hand, 
typically tries 
to decrease leakage by introducing some controlled noise.
This noise introduction can be modeled as a type of protocol composition, i.e., 
a probabilistic choice among different protocols, and its effect on
the amount of leakage depends heavily on whether or not this choice is 
visible to the adversary.
In this work we consider operators for modeling visible and
invisible choice in protocol composition, and we study 
their algebraic properties.
We then formalize the interplay between defender and adversary in a 
game-theoretic framework adapted to the specific issues of QIF, 
where the payoff is information leakage. 
We consider various kinds of leakage games, depending on whether players 
act simultaneously or sequentially, and on whether or not the choices of the 
defender are visible to the adversary.
Finally, we establish a hierarchy of these games in terms
of their information leakage, and provide methods for finding 
optimal strategies (at the points of equilibrium) for both attacker and
defender in the various cases.
The full version of this paper can be found in \url{https://arxiv.org/abs/1803.10042}
\end{abstract}

\section{Introduction}
\label{sec:introdction}
A fundamental problem in computer security is the leakage of sensitive information due to  
\emph{correlation} of secret values with 
\emph{observables}---i.e., any information accessible to the attacker, 
such as, for instance, the system's outputs or execution time.
The typical defense consists in reducing this correlation, 
which can be done in, essentially, two ways.
The first, applicable when the correspondence secret-observable is deterministic, 
consists in coarsening the equivalence classes of 
secrets that give rise to the same observables. 
This can be 
achieved with post-processing, i.e., sequentially composing the original system 
with a program that removes information from observables.
For example, a typical attack on encrypted web traffic consists on the analysis of the packets' 
length, and a typical defense consists in padding extra bits so to diminish the length variety 
\cite{sun:02:SandP}. 

The second kind of defense, on which we focus in this work, 
consists in adding controlled noise to the observables produced by
the system. 
This can be usually seen as a composition of different protocols via 
probabilistic choice.
\begin{example}[Differential privacy]
Consider a counting query $f$, namely a function that, applied to a dataset $x$, returns the number of individuals in $x$ that satisfy a given property. A way to implement differential privacy \cite{Dwork:06:TCC} is to add geometrical noise to the result of $f$, so to obtain a probability distribution $P$ on integers of the form $P(z) = c \, e^{|z-f(x)|}$, where $c$ is a normalization factor. The resulting mechanism can  be interpreted as a probabilistic choice on protocols of the form $f(x), f(x) + 1, f(x) + 2, \ldots , f(x) - 1, f(x)-2, \ldots$, where the probability assigned to 
 $f(x) + n$ and to $f(x)-n$ decreases exponentially with $n$.
\end{example}
\begin{example}[Dining cryptographers]
Consider two agents running the dining cryptographers protocol \cite{Chaum:88:JC}, which consists in tossing a fair binary coin and then declaring 
the exclusive or $\oplus$ of their secret value $x$ and the result of the coin.  The protocol can be thought as the fair probabilistic choice  of two protocols, one consisting simply of declaring $x$, and the other declaring $x\oplus 1$. 
\end{example}

Most of the work in the literature of quantitative information flow (QIF) considers passive attacks,
in which the adversary only observes the system. 
Notable exceptions are the works \cite{Boreale:15:LMCS,Mardziel:14:SP,Alvim:17:GameSec}, 
which consider attackers 
who interact with and influence the system, possibly in an adaptive way, with
the purpose of maximizing the leakage of information. 

\begin{example}[CRIME attack]\label{exe:Crime}
Compression Ratio Info-leak Made Easy (CRIME) \cite{Rizzo:12:Ekoparty} is a
security exploit against secret web cookies over connections using the HTTPS and SPDY protocols and  data compression. 
The idea is that the attacker can inject some content $a$ in the communication of the secret $x$ from the target site to the server. The server then compresses and  encrypts the data, including both $a$ and $x$, and sends back the result. By observing the length of the result, the attacker can then infer information about $x$. 
To mitigate the leakage, one possible defense would consist in transmitting, along with $x$, also an encryption method $f$ selected randomly from a set $F$. Again, the resulting protocol can be seen as a composition, using probabilistic choice, of the protocols in the set $F$.  
\end{example}

In all examples above the main use of the probabilistic choice is to  
obfuscate the relation between secrets and observables, 
thus reducing their correlation---and, hence, the information leakage. 
To achieve this goal, it is essential that the attacker never comes to 
know the result of the choice. 
In the CRIME example, however, if 
$f$ and  $a$ are chosen independently, then (in general) it is still better 
to choose $f$ probabilistically, even if the adversary will come to know, afterwards, 
the choice of $f$. 
In fact, this is true also for the attacker: his best strategies (in general) are to chose $a$
according to some probability distribution. 
Indeed, suppose that $F=\{f_1,f_2\}$ are the defender's choices and 
$A=\{a_1,a_2\}$ are the attacker's, and that 
$f_1(\cdot,a_1)$ leaks more than $f_1(\cdot,a_2)$, while $f_2(\cdot,a_1)$ 
leaks less than $f_2(\cdot,a_2)$. 
This is a scenario like \emph{the matching pennies} in game theory: 
if one player selects an action deterministically, the other player 
may exploit this choice and  get an advantage. 
For each player the optimal strategy is to play probabilistically, 
using a distribution that maximizes his own gain for all possible actions 
of the adversary. 
In zero-sum games, in which the gain of one player coincides 
with the loss of the other, the optimal pair of distributions always exists, 
and it is called \emph{saddle point}. 
It also coincides with the \emph{Nash equilibrium}, which is defined as the 
point in which neither of the two players gets any advantage in changing 
unilaterally his strategy. 

Motivated by these examples, this paper investigates the two kinds of choice, visible 
and hidden (to the attacker), in a game-theoretic setting. 
Looking at them as language operators, we study their algebraic properties, which will
help reason about their behavior in games. 
We consider zero-sum games, in which the gain (for the attacker) is represented by the leakage. 
While for visible choice it is appropriate to use the ``classic'' game-theoretic
framework, for hidden  choice we need to adopt the more general framework of the
\emph{information leakage games} proposed in \cite{Alvim:17:GameSec}. 
This happens because, in contrast with standard game theory, 
in games with hidden choice the utility of a mixed strategy is 
a convex function of the distribution on the defender's pure actions, 
rather than simply the expected value of their utilities.
We will consider both simultaneous games---in which each player chooses independently---and sequential games--- in which one player chooses his action first. We aim at comparing all these situations, and at identifying the precise advantage of the hidden choice over the visible one. 

To measure leakage we use the well-known information-theoretic model. 
A central notion in this model is that of \emph{entropy}, but here
we  use its converse, \emph{vulnerability}, which 
represents  the magnitude of the threat. 
In order to derive results as general as possible, we  adopt 
the very comprehensive notion of vulnerability as any convex and continuous 
function, as  used in \cite{Boreale:15:LMCS} and 
\cite{Alvim:16:CSF}. 
This notion has been shown \cite{Alvim:16:CSF} to subsume most 
 information measures, including 
\emph{Bayes vulnerability} (aka min-vulnerability, aka (the converse of) 
Bayes risk)~\cite{Smith:09:FOSSACS,Chatzikokolakis:08:JCS}, 
\emph{Shannon entropy}~\cite{Shannon:48:Bell}, 
\emph{guessing entropy}~\cite{Massey:94:IT}, and 
\emph{$g$-vulnerability}~\cite{Alvim:12:CSF}.\\

The main contributions of this paper are:
\begin{itemize}
\item We present a general framework for reasoning about information leakage
in a game-theoretic setting, extending the notion of information leakage games proposed 
in~\cite{Alvim:17:GameSec} to both simultaneous and sequential games,
with either hidden or visible choice.

\item We present a rigorous compositional way, using visible and hidden choice operators, for representing adversary and defender's actions in information leakage games.
In particular, we study the algebraic properties of visible and hidden choice on channels,
and compare the two kinds of choice with respect to the capability of reducing 
leakage, in  presence of an adaptive attacker.

\item 
We provide a taxonomy of the various scenarios (simultaneous and sequential) 
showing when randomization is necessary, for either attacker or defender, 
to achieve optimality.
Although it is well-known in information flow that the defender's best strategy 
is usually randomized, only recently it has been shown that
when defender and adversary act simultaneously, the adversary's optimal strategy 
also requires randomization~\cite{Alvim:17:GameSec}.

\item We use our framework in a detailed case study of a password-checking protocol. 
The naive program, which checks the password bit by bit and stops when it finds a 
mismatch, is clearly very insecure, because it reveals at each attempt the maximum 
correct prefix. 
On the other hand,  if we continue checking until the end of the string (time padding), 
the program becomes very inefficient. 
We show that, by using probabilistic choice instead, 
we can obtain a good trade-off between security and efficiency.
\end{itemize}

\paragraph*{Plan of the paper.}
The remaining of the paper is organized as follows.
In Section~\ref{sec:preliminaries} we review some basic notions of game theory and 
quantitative information flow. 
In Section~\ref{sec:running-example} we introduce our running example. 
In Section~\ref{sec:operators} we define the visible and hidden choice operators 
and demonstrate their algebraic properties. 
In Section~\ref{sec:games-setup}, the core of the paper, we examine
various scenarios for leakage games. 
In Section~\ref{sec:password-example} we show an application of our framework 
to a password checker. 
In Section~\ref{sec:related-work} we discuss related work and,
finally, in Section~\ref{sec:conclusion} we conclude.   

\section{Preliminaries}
\label{sec:preliminaries}
In this section we review some basic notions from game theory
and quantitative information flow.
We use the following notation:
Given a set $\cali$, we denote by $\distr\cali$ the 
\emph{set of all probability distributions} over $\cali$.
Given $\mu\in \distr\cali$, its \emph{support}
$\supp{\mu} \eqdef \{ i \in \cali : \mu(i)>0 \}$ 
is the set of its elements with positive probability.
We use $i{\samplefrom}\mu$ to indicate that a value 
$i{\in}\cali$ is sampled from a distribution $\mu$ on $\cali$.

\subsection{Basic concepts from game theory}
\label{subsec:game-theory}

\subsubsection{Two-player games}

\emph{Two-player games} are a model for reasoning about 
the behavior of two players.
In a game, each player has at its disposal a set of \emph{actions} 
that he can perform, and he obtains some gain or loss depending on
the actions chosen by both  players.
Gains and  losses are defined using a real-valued \emph{payoff
function}.
Each player is assumed to be \emph{rational}, i.e.,  
his choice is driven by the attempt to maximize his own expected payoff.
We also assume that 
the set of possible actions and the payoff functions of both players 
are \emph{common knowledge}.

In this paper we only consider \emph{finite games}, in which the 
set of actions available to the players are finite.   
Next we introduce an important distinction between \emph{simultaneous} 
and \emph{sequential} games. 
In the following, we will call the two players \emph{defender} 
and \emph{attacker}.

\subsubsection{Simultaneous games}

In a  simultaneous game, each player chooses his action without knowing 
the action chosen by the other.
The term ``simultaneous'' here does  not  mean that the players' actions 
are chosen at the same time, but only that they are chosen independently.
Formally, such a game  is defined as a tuple\footnote{Following the convention of \emph{security games}, we set the first player to be the defender.} 
$(\cald,\, \cala,\, \payd, \paya)$, where
$\cald$ is a nonempty set of \emph{defender's actions}, 
$\cala$ is a nonempty set of \emph{attacker's actions},
$\payd: \cald\times\cala \rightarrow \reals$ is the \emph{defender's payoff function}, and
$\paya: \cald\times\cala \rightarrow \reals$ is the \emph{attacker's payoff function}.

Each player may choose an action deterministically or probabilistically.
A \emph{pure strategy} of the defender (resp. attacker) is a deterministic 
choice of an action, i.e., an element $d\in\cald$ (resp. $a\in\cala$). 
A pair $(d, a)$ is called \emph{pure strategy profile}, and 
$\payd(d, a)$, $\paya(d, a)$ represent the defender's and the attacker's 
payoffs, respectively. 
A \emph{mixed strategy} of the defender (resp. attacker) is a probabilistic 
choice of an action, defined as a probability distribution 
$\delta\in\distr\cald$ (resp. $\alpha\in\distr\cala$).
A pair  $(\delta, \alpha)$ is called  \emph{mixed strategy profile}.
The defender's and the attacker's \emph{expected payoff functions} for
mixed strategies are defined, respectively, as:
$
\Payd(\delta,\alpha)
\eqdef {\expectDouble{d\leftarrow\delta}{a\leftarrow\alpha} \payd(d, a)}
=\sum_{\substack{d\in\cald\\ a\in\cala}} \delta(d) \alpha(a) \payd(d, a)
$
and 
$
\Paya(\delta,\alpha) 
\eqdef {\expectDouble{d\leftarrow\delta}{a\leftarrow\alpha}  \paya(d, a)}
= \sum_{\substack{d\in\cald\\ a\in\cala}} \delta(d) \alpha(a) \paya(d, a)
$.

A defender's mixed strategy $\delta\in\distr\cald$ is a \emph{best response} to 
an attacker's mixed strategy $\alpha\in\distr\cala$ if 
$\Payd(\delta, \alpha) = \max_{\delta'\in\distr\cald}\Payd(\delta', \alpha)$.
Symmetrically, $\alpha\in\distr\cala$ is a \emph{best response} to 
$\delta\in\distr\cald$ if 
$\Paya(\delta, \alpha) = \max_{\alpha'\in\distr\cala}\Payd(\delta, \alpha')$.
A \emph{mixed-strategy Nash equilibrium} is a  profile $(\delta^*, \alpha^*)$ such 
that $\delta^*$ is the best response to $\alpha^*$ and vice versa. 
This means that in a Nash equilibrium, no unilateral deviation by any single 
player provides better payoff to that player.
If $\delta^*$ and $\alpha^*$ are point distributions 
concentrated on some $d^*\in\cald$ and $a^*\in\cala$ respectively, 
then $(\delta^*, \alpha^*)$ is a \emph{pure-strategy Nash equilibrium}, 
and will be denoted by $(d^*, a^*)$. 
While not all games have a pure strategy Nash equilibrium, every finite game 
has a mixed strategy Nash equilibrium.

\subsubsection{Sequential games}

In a sequential game players may take turns in choosing their
actions.
In this paper, we only consider the case in which each player 
moves only once, in such a way that one of the players (\emph{the leader}) 
chooses his action first, and commits to it, before the other player 
(\emph{the follower}) makes his choice. 
The follower may have total knowledge of the choice made by the leader, or only partial. 
We refer to the two scenarios by the terms \emph{perfect} and \emph{imperfect information}, 
respectively.

We now give the precise definitions assuming that the leader is the defender.  
The case in which the leader is the attacker is similar. 

A \emph{defender-first sequential game with perfect information} is a 
tuple $(\cald,\, \allowbreak \calda,\, \allowbreak \payd, \allowbreak \paya)$ where $\cald$, $\cala$, $\payd$ and $\paya$ are
defined as in simultaneous games. 
Also the strategies of the defender (the leader) are defined as in simultaneous games:  
an  action $d\in \cald$ for the pure case, and a  distribution $\delta\in\distr\cald$  for the mixed one. 
On the other hand, a  pure strategy for the attacker is a function $\psa:\calda$, which represents the fact that 
his choice of an action $\psa$ in $\cala$ depends on the 
defender's choice $d$.
An attacker's mixed strategy is a probability distribution  $\msa\in\distr(\calda)$ over his pure strategies.\footnote{The definition of the mixed strategies as $\distr(\cald \rightarrow \cala)$ means that the attacker 
draws a function $\psa:\calda$ \emph{before} he knows the choice of the defender. In contrast, the so-called  \emph{behavioral strategies}   are defined as functions $\cald \rightarrow \distr\cala$, and formalize the idea that the draw is made  after the attacker knows such choice. In our setting,  these two definitions are equivalent, in the sense that they yield the same payoff.
}
The defender's and the attacker's \emph{expected payoff functions} for mixed strategies are defined, respectively, as
$
\Payd(\delta,\msa) 
\eqdef{\expectDouble{d\leftarrow\delta}{\psa\leftarrow\msa}
  \payd(d, \psa(d))}
= \sum_{\substack{d\in\cald\\ \psa:\calda}} \delta(d) \msa(\psa) \payd(d, \psa(d))
$
and
$
\Paya(\delta,\msa)
\eqdef{\expectDouble{d\leftarrow\delta}{\psa\leftarrow\msa} \paya(d, \psa(d))}
= \sum_{\substack{d\in\cald\\ \psa:\calda}} \delta(d) \msa(\psa) \paya(d, \psa(d))
$.

The case of imperfect information 
is typically formalized by assuming an \emph{indistinguishability (equivalence) relation} over the actions chosen 
by the leader, representing a scenario in which the follower cannot distinguish between the actions belonging to the same equivalence class. 
The pure strategies of the followers, therefore,  are functions from the set of the equivalence classes on the actions of the leader to his own actions. Formally,  a \emph{defender-first  sequential game with imperfect information } is a
tuple $(\cald,\, \infoseta\rightarrow\cala,\, \payd, \paya)$ where $\cald$, $\cala$, $\payd$ and $\paya$ are
defined as in simultaneous games, and $\infoseta$ is a partition of $\cald$.  
The \emph{expected payoff functions} are defined as before, except that now the argument of 
$\psa$ is the equivalence class of $d$. 
Note that in the case in which all defender's actions are indistinguishable from each other at the eyes of the attacker  (\emph{totally imperfect information}), we have  $\infoseta = \{\cald\}$ and the expected payoff
functions coincide with those of the simultaneous games.

\subsubsection{Zero-sum games and Minimax Theorem}

A game $(\cald,\, \cala,\, \payd, \paya)$ is \emph{zero-sum} if for any $d\in\cald$ and any $a\in\cala$, the defender's loss is equivalent to the attacker's gain, i.e., $\payd(d, a) = -\paya(d, a)$.
For brevity, in zero-sum games we denote by $u$ the attacker's payoff function $\paya$, and by $U$ the attacker's expected payoff $\Paya$.\footnote{Conventionally in game theory  the payoff $u$ is  set to 
be that of the first player, but we prefer to look at the payoff from the point of view of the attacker to be in line with the definition of payoff as vulnerability.}
Consequently,  the goal of the defender is to minimize $U$, and the goal of the attacker is to maximize it. 

In simultaneous zero-sum games the Nash equilibrium corresponds to the solution of the \emph{minimax} problem (or equivalently,  the \emph{maximin} problem), namely, the strategy profile $(\delta^*, \alpha^*)$ such that 
$U(\delta^*, \alpha^*)=\min_{\delta} \max_{\alpha} U(\delta, \alpha)$. 
The  von Neumann's minimax theorem, in fact, ensures that such solution (which always exists) is stable.

\begin{restatable}[von Neumann's minimax theorem]{Theorem}{res:vonneumann}
\label{theo:vonneumann}
Let $\calx \subset \reals^m$ and $\caly \subset \reals^n$ be compact convex sets,
and $\Pay: \calx\times\caly\rightarrow\reals$ be a continuous function such that
$\Pay(x, y)$ is a convex function in $x\in\calx$ and a concave function in $y\in\caly$.
Then
$
\min_{x\in\calx} \max_{y\in\caly} \Pay(x, y) = \max_{y\in\caly} \min_{x\in\calx} \Pay(x, y)
$.
\end{restatable}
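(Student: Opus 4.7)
The plan is to split the claim into the standard ``easy direction'' and a harder reverse inequality, and to prove the latter via a fixed-point argument.

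The inequality $\min_{x\in\calx}\max_{y\in\caly}\Pay(x,y) \geq \max_{y\in\caly}\min_{x\in\calx}\Pay(x,y)$ requires no convexity: for arbitrary $\bar x\in\calx$ and $\bar y\in\caly$ we have $\Pay(\bar x,\bar y)\leq \max_{y\in\caly}\Pay(\bar x,y)$, hence $\min_{x\in\calx}\Pay(x,\bar y)\leq \min_{x\in\calx}\max_{y\in\caly}\Pay(x,y)$, and maximizing the left side over $\bar y$ (the right side being independent of $\bar y$) yields the claim. Compactness of $\calx$ and $\caly$ together with continuity of $\Pay$ ensure all extrema are attained.

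For the reverse inequality I would invoke Kakutani's fixed-point theorem. Define the best-response correspondences $B_{\calx}(y)\eqdef\argmin_{x\in\calx}\Pay(x,y)$ and $B_{\caly}(x)\eqdef\argmax_{y\in\caly}\Pay(x,y)$. Convexity of $\Pay(\cdot,y)$ together with continuity and compactness of $\calx$ makes $B_{\calx}(y)$ a nonempty, convex, compact subset of $\calx$; by symmetry, concavity of $\Pay(x,\cdot)$ does the same for $B_{\caly}(x)\subseteq\caly$. Berge's maximum theorem then gives that both correspondences are upper hemi-continuous. Consequently the product correspondence $(x,y)\mapsto B_{\calx}(y)\times B_{\caly}(x)$ on the compact convex set $\calx\times\caly$ satisfies the hypotheses of Kakutani's theorem and therefore admits a fixed point $(x^*,y^*)$.

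At such a fixed point one has $x^*\in B_{\calx}(y^*)$ and $y^*\in B_{\caly}(x^*)$, so
\[
\max_{y\in\caly}\min_{x\in\calx}\Pay(x,y) \;\geq\; \min_{x\in\calx}\Pay(x,y^*) \;=\; \Pay(x^*,y^*) \;=\; \max_{y\in\caly}\Pay(x^*,y) \;\geq\; \min_{x\in\calx}\max_{y\in\caly}\Pay(x,y),
\]
which, combined with the easy direction, forces equality throughout. The main obstacle in executing this plan is the careful verification of the Kakutani hypotheses, in particular upper hemi-continuity of the best-response correspondences; here continuity of $\Pay$ and compactness of the domains are indispensable, and strict convexity would not be needed because Kakutani only requires convex (not singleton) values. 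An alternative route would be a separating-hyperplane argument applied to a suitable convex subset of $\reals^{m+n+1}$ built from the epigraph/hypograph of $\Pay$, thereby avoiding fixed-point machinery, but this tends to be more technical and I would adopt it only if one wished to remain entirely within classical convex analysis.
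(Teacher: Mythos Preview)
Your argument is a correct and standard proof of the minimax theorem via Kakutani's fixed-point theorem: the easy direction is handled cleanly, and for the hard direction the verification that the best-response correspondences are nonempty, convex, compact-valued and (via Berge's maximum theorem) upper hemicontinuous is exactly what is needed to invoke Kakutani and obtain a saddle point.

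However, there is nothing to compare it against: the paper does \emph{not} supply a proof of this statement. Theorem~\ref{theo:vonneumann} appears in the preliminaries (Section~\ref{subsec:game-theory}) as a classical background result attributed to von Neumann, and no proof is given either in the body or in the appendix of technical proofs. The paper simply quotes the theorem in order to use the existence of saddle points and the minimax equality later on (e.g., when analysing Games~I and~IV). So your proposal is not an alternative to the paper's proof; it is a proof where the paper offers none, and any of the routes you mention (Kakutani, or a separating-hyperplane/Sion-style argument) would be acceptable here.
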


A related property is that, under the conditions of Theorem~\ref{theo:vonneumann}, there exists a \emph{saddle point} $(x^*, y^*)$ s.t., for all $x{\in}\calx$  and $y{\in}\caly$: 
$
 \Pay(x^{*}, y){\leq}\Pay(x^*, y^*){\leq}\Pay(x, y^*) 
$.

The solution of the minimax problem can be obtained by using convex optimization techniques. 
In case $\Pay(x, y)$ is affine in $x$ and in $y$, we can also use linear optimization. 

In case $\cald$ and $\cala$ contain two elements each, there is a closed form for the solution.  
Let  $\cald =\{d_0,d_1\}$ and $\cala=\{a_0,a_1\}$ respectively. 
Let $u_{ij}$ be the utility of the defender on $d_i, a_j$.           
Then the Nash equilibrium $(\delta^*,\alpha^*)$ is given by:
$
\delta^*(d_0)=\nicefrac{(u_{11} - u_{10})}{(u_{00} - u_{01} - u_{10} +u_{11})}
$
and
$
\alpha^*(a_0)=\nicefrac{(u_{11} - u_{01})}{(u_{00} - u_{01} - u_{10} +u_{11})}
$
if these values are in $[0,1]$.  
Note that, since there are only two elements, the strategy $\delta^*$ is completely specified by its value in $d_0$, and analogously for $\alpha^*$.
\subsection{Quantitative information flow}
\label{subsec:qif}

Finally, we briefly review the standard framework of
quantitative information flow, which is concerned with
measuring the amount of information leakage in a 
(computational) system.
  
\paragraph{Secrets and vulnerability}
A \emph{secret} is some piece of sensitive information the 
defender wants to protect, such as a user's password, social 
security number, or current location. 
The attacker usually only has some partial knowledge 
about the value of a secret, represented as a probability
distribution on secrets called a \emph{prior}.
We denote by $\calx$ the set of possible secrets,
and we typically use $\pi$ to denote a prior belonging to 
the set $\dist{\calx}$ of probability distributions over 
$\calx$. 

The  \emph{vulnerability} of a secret is a measure of the utility that it represents for the attacker. 
In this paper we consider a very general notion of  vulnerability, following~\cite{Alvim:16:CSF}, and we define 
a vulnerability $\vf$ to be any continuous and convex function of type $\dist{\calx} \rightarrow \reals$.
It has been shown in~\cite{Alvim:16:CSF} 
that these functions coincide
with the set of $g$-vulnerabilities, 
and are, in a precise sense, the most general 
information measures w.r.t. a set of basic 
axioms.~\footnote{More precisely, if posterior vulnerability 
is defined as the expectation of the vulnerability of posterior
distributions, 
the measure respects the data-processing inequality 
and always yields non-negative leakage iff
vulnerability is convex.}

\paragraph{Channels, posterior vulnerability, and leakage}
Computational systems can be modeled as information
theoretic channels.
A 
\emph{channel}
$C : \calx \times \caly \rightarrow \reals$ is a function
in which $\calx$ is a set of \emph{input values}, $\caly$ is a set 
of \emph{output values}, and $C(x,y)$ represents the conditional 
probability of the channel producing output $y \in \caly$ when 
input $x \in \calx$ is provided. 
Every channel $C$ satisfies $0 \leq C(x,y) \leq 1$ for all 
$x\in\calx$ and $y\in\caly$, and $\sum_{y\in\caly} C(x,y) = 1$ for all $x\in\calx$.

A distribution $\pi\in\dist{\calx}$ and a channel $C$ 
with inputs $\calx$ and outputs $\caly$ induce a joint distribution
$p(x,y) = \pi(x)C({x,y})$ on $\calx \times \caly$,
producing joint random variables $X, Y$ with marginal 
probabilities $p(x) = \sum_{y} p(x,y)$ and 
$p(y) = \sum_{x} p(x,y)$, and conditional probabilities 
$p(x{\mid}y) = \nicefrac{p(x,y)}{p(y)}$ if $p(y)\neq 0$. 
For a given $y$ (s.t. $p(y)\neq 0$), the conditional 
probabilities $p(x{\mid}y)$ for each $x \in \calx$ form the 
\emph{posterior distribution $p_{X \mid y}$}.

A channel $C$ in which $\calx$ is a set of secret values 
and $\caly$ is a set of observable values produced
by a system can be used to model computations on secrets.
Assuming the attacker has prior knowledge $\pi$ about
the secret value, knows how a channel $C$ works, and
can observe the channel's outputs, the effect of the channel 
is to update the attacker's knowledge from $\pi$ to a 
collection of posteriors $p_{X \mid y}$, each occurring 
with probability $p(y)$.%

Given a vulnerability $\vf$, a prior $\pi$, and a channel $C$, 
the \emph{posterior vulnerability} $\postvf{\pi}{C}$ is the vulnerability 
of the secret after the attacker has observed the output of the channel $C$.
Formally:
$\postvf{\pi}{C} \eqdef \sum_{y \in \caly} p(y) \priorvf{p_{X \mid y}}$.

It is known from the literature~\cite{Alvim:16:CSF} 
that  the posterior vulnerability is a convex function of $\pi$. 
Namely, for any channel $C$, any family of distributions $\{\pi_i\}$, and any set of convex coefficients $\{c_i\}$, we have: 
$\postvf{\sum_i c_i \pi_i}{C}
\leq \sum_{i} c_i \postvf{\pi_i}{C}$.

The \emph{(information) leakage} of 
channel $C$ under prior $\pi$ is a comparison between 
the vulnerability of the secret before the system
was run---called \emph{prior vulnerability}---and the 
posterior vulnerability of the secret.
Leakage reflects by how much the observation of 
the system's outputs increases the attacker's 
information about the secret. 
It can be defined either 
\emph{additively} ($\postvf{\pi}{C}-\priorvf{\pi}$), or
\emph{multiplicatively} ($\nicefrac{\postvf{\pi}{C}}{\priorvf{\pi}}$).

\section{An illustrative example}
\label{sec:running-example}
\begin{wrapfigure}{r}{0.31\linewidth}
\vspace{-8mm}
\begin{footnotesize}
\noindent 
\texttt{\underline{Program 0}}\\[1mm]
\noindent \texttt{\textbf{High Input:}} $x \in \{0,1\}$\\
\noindent \texttt{\textbf{Low Input:}} $a \in \{0,1\}$\\
\texttt{\textbf{Output:}} $y\in \{0,1\}$\\[0.5mm]
$y= x \cdot a$\\
\textbf{return} $y$\\[2mm]
\noindent 
\texttt{\underline{Program 1}}\\[1mm]
\noindent \texttt{\textbf{High Input:}} $x \in \{0,1\}$\\
\noindent \texttt{\textbf{Low Input:}} $a \in \{0,1\}$\\
\texttt{\textbf{Output:}} $y\in \{0,1\}$\\[0.5mm]
$c \samplefrom{\text{flip coin with bias $\nicefrac{a}{3}$}}$\\
\textbf{if} $c = \mathit{heads}$ \{$y = x$\} \\
\textbf{else} \{$y = \bar{x}$\}\\
\textbf{return} $y$\\
\vspace{-6mm}
\caption{Running example.}
\label{fig:running-exa}
\end{footnotesize}
\vspace{-6mm}
\end{wrapfigure}
We introduce an example which will serve as running example through the paper. 
Although admittedly contrived, this example is simple and yet produces
different leakage measures for all different combinations of visible/invisible choice and simultaneous/sequential games, 
thus providing a way to compare all different scenarios we are interested in. 

Consider that a binary secret must be processed by a program.
As usual, a defender wants to protect the secret value, 
whereas an attacker wants to infer it by observing the 
system's output.
Assume the defender can choose which among two
alternative versions of the program to run.
Both programs take the secret value $x$ as high input,  
and a binary low input $a$ whose value is chosen by the attacker. 
They both return the output in a low variable $y$.~\footnote{We adopt the usual 
convention in QIF of referring to secret variables, inputs and outputs in programs 
as \emph{high}, and to their observable counterparts as \emph{low}.}
\texttt{Program 0} returns the binary product of $x$ and $a$,
whereas \texttt{Program 1} flips a coin with bias $\nicefrac{a}{3}$
(i.e., a coin which returns heads with probability $\nicefrac{a}{3}$)
and returns $x$ if the  result is
heads, and the complement $\bar{x}$ of $x$ otherwise.
The two programs are represented in Figure~\ref{fig:running-exa}.

The combined choices of the defender's and of the
attacker's determine how the system behaves.
Let $\cald{=} \{0,1\}$ represent the set of the defender's
choices---i.e., the index of the program to use---, and
$\cala = \{0,1\}$ represent the set of the attacker's
choices---i.e., the value of the low input $a$. 
We shall refer to the elements of $\cald$ and $\cala$ as \emph{actions}.
For each possible combination of actions 
$d \in \cald$ and $a \in \cala$, we can construct a channel 
$C_{da}$ modeling how the resulting system behaves.
Each channel $C_{da}$ is a function of type 
$\calx \times \caly \rightarrow \reals$, where 
$\calx = \{0,1\}$ is the set of possible high input values 
for the system, and $\caly = \{0,1\}$ is the set of possible 
output values from the system.
Intuitively, each channel provides the probability that the
system (which was fixed by the defender) produces output 
$y\in\caly$ given that the high input is $x \in \calx$
(and that the low input was fixed by the attacker). 
The four possible channels are depicted as matrices below.
\begin{small}
\begin{align*}
\begin{array}{|c|c|c|}
\hline
C_{00} & y=0 & y=1 \\ \hline
x=0    & 1 & 0 \\
x=1    & 1 & 0 \\ \hline
\end{array}
\quad
\begin{array}{|c|c|c|}
\hline
C_{01} & y=0 & y=1 \\ \hline
x=0    & 1  & 0  \\
x=1    & 0  & 1  \\ \hline
\end{array}
\quad
\begin{array}{|c|c|c|}
\hline
C_{10} & y=0 & y=1 \\ \hline
x=0    & 0 & 1 \\
x=1    & 1 & 0 \\ \hline
\end{array}
\quad
\begin{array}{|c|c|c|}
\hline
C_{11} & y=0 & y=1 \\ \hline
x=0    & \nicefrac{1}{3} & \nicefrac{2}{3} \\
x=1    & \nicefrac{2}{3} & \nicefrac{1}{3} \\ \hline
\end{array}
\end{align*}
\end{small}

Note that channel $C_{00}$ does not leak any 
information about the input $x$ (i.e., it is 
\emph{non-interferent}), whereas channels $C_{01}$ and 
$C_{10}$ completely reveal $x$.
Channel $C_{11}$ is an intermediate case: it leaks some 
information about $x$, but not all.

We want to investigate how the defender's and the attacker's choices
influence the leakage of the system. 
For that we can just consider the (simpler) notion of posterior vulnerability, 
since in order to make the comparison fair we need to assume that the prior is 
always the same in the various scenarios, and this implies that the 
leakage is in a one-to-one correspondence with the posterior vulnerability 
(this happens for both additive and multiplicative leakage). 

\begin{wraptable}{r}{0.42\linewidth}
\centering
\begin{small}
\renewcommand{\arraystretch}{1.15}
\vspace{-9mm}
\[
\begin{array}{|c|c|c|}
\hline
\vf & a=0 & a=1 \\ \hline
d=0    & \nicefrac{1}{2} & 1 \\ 
\hline
d=1    & 1 &  \nicefrac{2}{3}\\ \hline
\end{array}
\]
\renewcommand{\arraystretch}{1}
\vspace{-6mm}
\end{small}
\caption{{\rm Vulnerability of each channel $C_{da}$ in the running example.} }
\label{table:vulnerabilitygame}
\vspace{-6mm}
\end{wraptable}
For this example, assume we are interested in 
Bayes vulnerability~\cite{Chatzikokolakis:08:JCS,Smith:09:FOSSACS}, defined as
$\vf(\pi) = \max_{x} \pi(x)$ for every $\pi \in \dist{\calx}$.
Assume for simplicity that 
the prior is the uniform prior $\pi_u$.
In this case we know from \cite{Braun:09:MFPS} that the posterior 
Bayes vulnerability of a channel is the sum of the greatest elements 
of each column, divided by the total number of inputs. 
Table~\ref{table:vulnerabilitygame} provides the
Bayes vulnerability $\vf_{da} \eqdef \postvf{\pi_{u}}{C_{da}}$
of each channel considered above.

Naturally, the attacker aims at maximizing the vulnerability of the system, 
while the defender tries to minimize it. 
The resulting vulnerability will depend on various factors, in particular on
whether the two players make their choice \emph{simultaneously} 
(i.e. without knowing the choice of the opponent) or \emph{sequentially}. 
Clearly, if the choice of a player who moves first is known by an opponent who moves second, 
the opponent will be in advantage. 
In the above example, for instance, if the defender knows the choice $a$ of the attacker, the most convenient choice for him is to set  $d=a$, and 
the vulnerability will be at most $\nicefrac{2}{3}$ . 
Vice versa, if the attacker knows the choice $d$  of the defender, the most convenient choice for him is to set  $a\neq d$. The vulnerability in this case will be $1$.

Things become more complicated when players make choices simultaneously. 
None of the pure choices of $d$ and $a$ are the best for the corresponding player, 
because the vulnerability of the system depends also on the (unknown) choice of the other player. 
Yet there is a strategy leading to the best possible situation for both players (the \emph{Nash equilibrium}), but it is mixed (i.e., probabilistic), in that the players randomize their choices according to some precise distribution. 

Another factor that affects vulnerability is whether or not the defender's choice is known to the attacker at the moment in which he observes the output of the channel. 
Obviously, this corresponds to whether or not the attacker knows what channel he is observing. 
Both cases are plausible: naturally the defender has all the interest in keeping his choice 
(and, hence, the channel used) secret, since then the attack will be less effective 
(i.e., leakage will be smaller). 
On the other hand, the attacker may be able to identify the channel used anyway, for instance 
because the two programs have different running times. 
We will call these two cases \emph{hidden} and \emph{visible} choice, respectively. 

It is possible to model players' strategies, as well as hidden and visible choices, 
as operations on channels. 
This means that we can look at the whole system as if it were a single channel, 
which will turn out to be useful for some proofs of our technical results.
Next section is dedicated to the definition of these operators. 
We will calculate the exact values for our example in Section~\ref{sec:games-setup}.

\section{Visible and hidden choice operators on channels}
\label{sec:operators}
In this section we define matrices and 
some basic operations on them.
Since channels are a particular kind of 
matrix, we use these matrix operations to define
the operations of visible and hidden choice among
channels, and to prove important properties of 
these channel operations.

\subsection{Matrices, and their basic operators}
\label{sec:matrices-basics}

Given two  sets $\calx$ and $\caly$, a \emph{matrix} 
is a total function of type $\calx \times \caly \rightarrow \reals$.
Two matrices 
$M_{1}: \calx_{1} \times \caly_{1} \rightarrow \reals$ and 
$M_{2}: \calx_{2} \times \caly_{2} \rightarrow \reals$ 
are said to be \emph{compatible} if $\calx_{1} = \calx_{2}$.
If it is also the case that $\caly_{1} = \caly_{2}$, we say 
that the matrices \emph{have the same type}.
The \emph{scalar multiplication} $r{\cdot}M$ between a scalar $r$ and
a matrix $M$ is defined as usual, and so is the \emph{summation}
$\left(\sum_{i \in \cali} M_{i}\right)(x,y) = M_{i_{1}}(x,y) \add \ldots \add M_{i_{n}}(x,y)$
of a family $\{M_{i}\}_{i \in \cali}$ of matrices all of a same type.

Given a family $\{M_{i}\}_{i \in \cali}$ of
compatible matrices s.t. each $M_{i}$ has type 
$\calx \times \caly_{i} \rightarrow \reals$, their \emph{concatenation} $\bigconc_{i \in \cali}$ 
is the matrix having all columns of every matrix in the family,
in such a way that every column is tagged with the matrix it
came from.
Formally, 
$\left( \bigconc_{i \in \cali} M_{i} \right)(x,(y,j)) = \,M_{j}(x,y)$, 
if $y \in \caly_{j}$,
and the resulting matrix has type
$\calx \times \left( \bigsqcup_{i \in \cali} \caly_{i} \right) \rightarrow \reals$.~\footnote{%
$\bigsqcup_{i \in \cali} \caly_{i} = \caly_{i_{1}} \sqcup \caly_{i_{2}} \sqcup \ldots \sqcup \caly_{i_{n}}$ denotes the \emph{disjoint union} 
$\{ (y,i) \mid y \in \caly_{i}, i \in \cali \}$
of the sets $\caly_{i_{1}}$, $\caly_{i_{2}}$, $\ldots$, $\caly_{i_{n}}$.
}
When the family $\{M_{i}\}$ has only two elements
we may use the \emph{binary} version $\conc$ of the
concatenation operator.
The following 
depicts the concatenation of two matrices $M_{1}$ and $M_{2}$ in tabular form.
$$
\begin{small}
\begin{array}{|c|cc|}
\hline
M_{1} & y_{1} & y_{2} \\ \hline
x_{1} & 1 & 2 \\
x_{2} & 3 & 4 \\ \hline
\end{array}
\conc
\begin{array}{|c|ccc|}
\hline
M_{2} & y_{1} & y_{2} & y_{3} \\ \hline
x_{1} & 5 & 6 & 7 \\ 
x_{2} & 8 & 9 & 10 \\ \hline
\end{array} 
\;=\;
\begin{array}{|c|ccccc|}
\hline
M_{1} \conc M_{2} & (y_{1},1) & (y_{2},1) & (y_{1},2) & (y_{2},2) & (y_{3},2) \\ \hline
x_{1} & 1 & 2 & 5 & 6 & 7 \\ 
x_{2} & 3 & 4 & 8 & 9 & 10 \\ \hline
\end{array}
\end{small}
$$

\subsection{Channels, and their hidden and visible choice operators}
\label{sec:operators-definition}

A channel 
is a \emph{stochastic} matrix, i.e., 
all elements are non-negative, and all rows sum up to 1.
Here we will define two operators specific for channels.
In the following, for any real value $0 \leq p \leq 1$, 
we denote by $\bar{p}$ the value $1 - p$.

\subsubsection{Hidden choice}
The first operator models a hidden probabilistic 
choice among channels.
Consider a family $\left\{ C_{i} \right\}_{i \in \cali}$ 
of channels of a same type.
Let $\mu \in \dist{\cali}$ be a probability
distribution on the elements of the index set $\cali$.
Consider an input $x$ is fed to one of
the channels in $\left\{ C_{i} \right\}_{i \in \cali}$,
where the channel is randomly picked according to $\mu$.
More precisely, an index $i \in \cali$ is sampled with 
probability $\mu(i)$,  then the input
$x$ is fed to channel $C_{i}$,
and the output $y$ produced by the channel is then made visible, 
but not the index $i$ of the channel that was used.
Note that we consider hidden choice only among
channels of a same type:  if the sets of outputs
were not identical, the produced output 
might implicitly reveal which channel was used.

Formally, given a family $\{C_{i}\}_{i \in \cali}$ of 
channels s.t. each $C_{i}$ has same type 
$\calx \times \caly \rightarrow \reals$, the 
\emph{hidden choice operator} $\HChoice{i}{\mu}$ 
is defined as 
$
\HChoice{i}{\mu} C_{i} = \sum_{i \in \cali} \mu(i) \,C_{i}
$.

\begin{restatable}{Proposition}{restypehiddenchoice}
\label{prop:type-hidden-choice}
Given a family $\{C_{i}\}_{i \in \cali}$ of 
channels of  type $\calx \times \caly \rightarrow \reals$, 
and a distribution $\mu$ on $\cali$, 
the hidden choice 
$\HChoice{i}{\mu} C_{i}$ is 
a channel of type
$\calx \times \caly \rightarrow \reals$.
\end{restatable}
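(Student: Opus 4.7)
The plan is to verify directly that the matrix $M := \HChoice{i}{\mu} C_i = \sum_{i \in \cali} \mu(i) C_i$ satisfies the three defining properties of a channel of type $\calx \times \caly \to \reals$: (i) it has the claimed domain and codomain, (ii) every entry lies in $[0,1]$ (or at least is non-negative), and (iii) every row sums to $1$.

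First I would handle the type. Each $C_i$ has type $\calx \times \caly \to \reals$, so for any scalar $\mu(i) \in \reals$, the scaled matrix $\mu(i)\cdot C_i$ has the same type. The summation of a family of matrices of the same type, as defined in Section~\ref{sec:matrices-basics}, is again a matrix of that type; hence $M : \calx \times \caly \to \reals$.

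Next I would check non-negativity. For every $(x,y) \in \calx \times \caly$, each term $\mu(i) C_i(x,y)$ is non-negative, since $\mu(i) \geq 0$ (as $\mu \in \distr\cali$) and $C_i(x,y) \geq 0$ (since $C_i$ is a channel). Hence $M(x,y) \geq 0$.

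Finally I would verify the row-stochasticity condition. Fixing $x \in \calx$ and exchanging the two finite sums (or invoking Fubini for the general case, which is justified since all terms are non-negative), I compute
\[
\sum_{y \in \caly} M(x,y) \;=\; \sum_{y \in \caly} \sum_{i \in \cali} \mu(i) C_i(x,y) \;=\; \sum_{i \in \cali} \mu(i) \sum_{y \in \caly} C_i(x,y) \;=\; \sum_{i \in \cali} \mu(i) \cdot 1 \;=\; 1,
\]
using that each $C_i$ is a channel (so $\sum_y C_i(x,y) = 1$) and that $\mu$ is a probability distribution (so $\sum_i \mu(i) = 1$). Combining the three facts yields that $M$ is a channel of type $\calx \times \caly \to \reals$, as required. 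No step presents a real obstacle; the only mild subtlety is the swap of summations, which is immediate in the finite case and otherwise follows from non-negativity.
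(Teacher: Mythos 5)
Your proof is correct and follows essentially the same route as the paper's: verify the type is preserved under scalar multiplication and summation, check non-negativity of each entry, and exchange the two finite sums to show each row sums to $1$. No further comment is needed.
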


In the particular case in which the family $\{C_{i}\}$ 
has only two elements $C_{i_{1}}$ and $C_{i_{2}}$, the
distribution $\mu$ on indexes is completely determined 
by a real value $0 \leq p \leq 1$ s.t. 
$\mu(i_{1}) = p$ and $\mu(i_{2}) = \bar{p}$.
In this case we may use the \emph{binary} version $\hchoice{p}$
of the hidden choice operator:
$
C_{i_{1}}{\hchoice{p}}C_{i_{2}} = p\,C_{i_{1}}{+}\bar{p}\,C_{i_{2}}
$.
%
The example 
below depicts the hidden choice
between channels $C_{1}$ and $C_{2}$, with probability
$p{=}\nicefrac{1}{3}$.
$$
\begin{small}
\begin{array}{|c|cc|}
\hline
C_{1} & y_{1} & y_{2} \\ \hline
x_{1} & \nicefrac{1}{2} & \nicefrac{1}{2} \\
x_{2} & \nicefrac{1}{3} & \nicefrac{2}{3} \\ \hline
\end{array}
\hchoice{\nicefrac{1}{3}}
\begin{array}{|c|cc|}
\hline
C_{2} & y_{1} & y_{2} \\ \hline
x_{1} & \nicefrac{1}{3} & \nicefrac{2}{3} \\
x_{2} & \nicefrac{1}{2} & \nicefrac{1}{2} \\ \hline
\end{array}
\; = \;
\begin{array}{|c|cc|}
\hline
C_{1} \hchoice{\nicefrac{1}{3}} C_{2} & y_{1} & y_{2} \\ \hline
x_{1} & \nicefrac{7}{18} & \nicefrac{11}{18} \\ 
x_{2} & \nicefrac{4}{9} & \nicefrac{5}{9} \\ \hline
\end{array}
\end{small}
$$

\subsubsection{Visible choice}
The second operator models a visible probabilistic 
choice among channels.
Consider a family $\left\{ C_{i} \right\}_{i \in \cali}$ 
of compatible channels.
Let $\mu \in \dist{\cali}$ be a probability
distribution on the elements of the index set $\cali$.
Consider an input $x$ is fed to one of
the channels in $\left\{ C_{i} \right\}_{i \in \cali}$,
where the channel is randomly picked according to $\mu$.
More precisely, an index $i \in \cali$ is sampled with 
probability $\mu(i)$,  then the input
$x$ is fed to channel $C_{i}$, and the output $y$ produced by the channel is then made visible, 
along with the index $i$ of the channel that was used.
Note that visible choice makes sense only 
between compatible channels,
but it is not required that the output set of each channel 
be the same.

Formally, given  $\{C_{i}\}_{i \in \cali}$ of
compatible channels s.t. each $C_{i}$ has type 
$\calx \times \caly_{i} \rightarrow \reals$, 
and a distribution $\mu$ on $\cali$, the 
\emph{visible choice operator} $\VChoice{i}{\mu}$ 
is defined as 
$
\VChoice{i}{\mu} C_{i} = \bigconc_{i \in \cali} \;\mu(i)\, C_{i}
$.

\begin{restatable}{Proposition}{restypevisiblechoice}
\label{prop:type-visible-choice}
Given a family $\{C_{i}\}_{i \in \cali}$ of 
compatible channels s.t. each $C_{i}$ has type 
$\calx \times \caly_{i} \rightarrow \reals$,
and a distribution $\mu$ on $\cali$, 
the result of the visible choice 
$\VChoice{i}{\mu} C_{i}$ is a channel
of type 
$\calx \times \left( \bigsqcup_{i \in \cali} \caly_{i} \right) \rightarrow \reals$.
\end{restatable}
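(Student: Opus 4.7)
The plan is to unpack the definition $\VChoice{i}{\mu} C_{i} = \bigconc_{i \in \cali} \mu(i)\, C_{i}$ and check the three conditions that make a matrix a channel of the claimed type: correct domain and codomain, entries in $[0,1]$, and rows summing to $1$.

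First, I would verify the type. Each scaled matrix $\mu(i)\,C_{i}$ has the same type as $C_{i}$, namely $\calx \times \caly_{i} \rightarrow \reals$, since scalar multiplication does not alter the index sets. The family $\{\mu(i)\,C_{i}\}_{i\in\cali}$ is therefore compatible (all share the input set $\calx$), so the concatenation operator from Section~\ref{sec:matrices-basics} applies and yields a matrix of type $\calx \times \left(\bigsqcup_{i\in\cali}\caly_{i}\right) \rightarrow \reals$, as required.

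Next I would check non-negativity and the upper bound: for every $x\in\calx$, $j\in\cali$, and $y\in\caly_{j}$, the definition of concatenation gives $\left(\VChoice{i}{\mu} C_{i}\right)(x,(y,j)) = \mu(j)\, C_{j}(x,y)$. Both factors lie in $[0,1]$ because $\mu$ is a probability distribution on $\cali$ and each $C_{j}$ is stochastic, so each entry of the result lies in $[0,1]$.

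The main (but still routine) step is the row-sum condition. Fix $x\in\calx$. Using the definition of the disjoint union indexing the output set and exchanging the order of summation, I get
\begin{align*}
\sum_{(y,j)\in \bigsqcup_{i\in\cali}\caly_{i}} \!\!\left(\VChoice{i}{\mu} C_{i}\right)(x,(y,j))
 &= \sum_{j\in\cali} \sum_{y\in\caly_{j}} \mu(j)\,C_{j}(x,y) \\
 &= \sum_{j\in\cali} \mu(j) \sum_{y\in\caly_{j}} C_{j}(x,y)
 = \sum_{j\in\cali} \mu(j) = 1,
\end{align*}
where the second-to-last equality uses that each $C_{j}$ is stochastic, and the last uses that $\mu\in\distr\cali$. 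Together with the previous two points this establishes that $\VChoice{i}{\mu} C_{i}$ is a channel of the stated type. I do not anticipate any obstacle: once the definitions of scalar multiplication, concatenation, and disjoint union are written out, the argument is a direct bookkeeping calculation, analogous to (and slightly easier than) the one needed for Proposition~\ref{prop:type-hidden-choice}.
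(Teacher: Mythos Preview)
Your proposal is correct and follows essentially the same approach as the paper: you unfold the definition of visible choice as a concatenation of scaled channels, identify the generic entry as $\mu(j)\,C_{j}(x,y)$, and then verify non-negativity and the row-sum condition using that each $C_{j}$ is stochastic and $\mu$ is a distribution. The only minor difference is that you explicitly note the entries are bounded above by $1$, which the paper omits (it is implied once rows sum to $1$ and entries are non-negative).
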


In the particular case the family $\{C_{i}\}$ 
has only two elements $C_{i_{1}}$ and $C_{i_{2}}$, the
distribution $\mu$ on indexes is completely determined 
by a real value $0 \leq p \leq 1$ s.t. 
$\mu(i_{1}) = p$ and $\mu(i_{2}) = \bar{p}$.
In this case we may use the \emph{binary} version $\vchoice{p}$
of the visible choice operator:
$
C_{i_{1}} \vchoice{p} C_{i_{2}} = p\, C_{i_{1}} \conc \bar{p}\, C_{i_{2}}
$.
%
The following 
depicts the visible choice 
betwee channels $C_{1}$ and $C_{3}$, with probability
$p{=}\nicefrac{1}{3}$.
$$
\begin{small}
\begin{array}{|c|cc|}
\hline
C_{1} & y_{1} & y_{2} \\ \hline
x_{1} & \nicefrac{1}{2} & \nicefrac{1}{2} \\
x_{2} & \nicefrac{1}{3} & \nicefrac{2}{3} \\ \hline
\end{array}
\vchoice{\nicefrac{1}{3}}
\begin{array}{|c|cc|}
\hline
C_{3} & y_{1} & y_{3} \\ \hline
x_{1} & \nicefrac{1}{3} & \nicefrac{2}{3} \\
x_{2} & \nicefrac{1}{2} & \nicefrac{1}{2} \\ \hline
\end{array}
\; = \;
\begin{array}{|c|cccc|}
\hline
C_{1} \vchoice{\nicefrac{1}{3}} C_{3} & (y_{1},1) & (y_{2},1) & (y_{1},3) & (y_{3},3) \\ \hline
x_{1} & \nicefrac{1}{6} & \nicefrac{1}{6} & \nicefrac{2}{9} & \nicefrac{4}{9} \\ 
x_{2} & \nicefrac{1}{9} & \nicefrac{2}{9} & \nicefrac{1}{3} & \nicefrac{1}{3} \\ \hline
\end{array}
\end{small}
$$

\subsection{Properties of hidden and visible choice operators}
\label{sec:operators-properties}

We now prove algebraic properties of channel operators.
These properties will be useful when we model a (more complex) 
protocol as the composition of smaller channels via hidden or visible 
choice.

Whereas the properties of hidden choice 
hold generally with equality, those of 
visible choice are subtler.
For instance, visible choice is not idempotent,
since in general $C{\vchoice{p}}C \neq C$.
(In fact if $C$ has type $\calx \times \caly \rightarrow \reals$,
$C{\vchoice{p}}C$ has type $\calx \times (\caly \sqcup \caly) \rightarrow \reals$.)
However, idempotency and other properties involving visible
choice hold if we replace the notion of 
equality with the more relaxed notion of \qm{equivalence} 
between channels.
Intuitively, two channels are equivalent if they
have the same input space and yield 
the same value of vulnerability for every 
prior and every vulnerability function.

\begin{Definition}[Equivalence of channels]
\label{def:equivalence-channels}
Two compatible channels $C_{1}$ and $C_{2}$
with domain $\calx$
are \emph{equivalent}, denoted by $C_{1} \equiv C_{2}$,
if for every prior $\pi \in \dist{\calx}$ and every 
posterior vulnerability $\vf$ we have
$
\postvf{\pi}{C_{1}} = \postvf{\pi}{C_{2}}
$.
\end{Definition}

Two equivalent channels are indistinguishable from the point of view of information leakage, and in most cases we can just identify them. 
Indeed, nowadays there is a tendency to use \emph{abstract channels}~\cite{McIver:14:POST,Alvim:16:CSF}, which capture exactly 
the important behavior with respect to any  form of leakage. 
In this paper, however, we cannot use abstract channels because the hidden choice operator needs a concrete representation in order to be defined
unambiguously. 

The first properties we prove regard idempotency of operators, which
can be used do simplify the representation of some protocols.

\begin{restatable}[Idempotency]{Proposition}{residempotency}
\label{prop:idempotency}
Given a family $\{C_{i}\}_{i \in \cali}$ of 
channels s.t. $C_{i} = C$ for all $i \in \cali$,
and a distribution $\mu$ on $\cali$, then:
(a) ${\HChoice{i}{\mu}C_{i} = C}$; and
(b) ${\VChoice{i}{\mu}C_{i} \equiv C}$.
\end{restatable}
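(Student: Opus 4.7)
The plan is to handle the two parts separately, since part (a) is a straightforward equality of matrices whereas part (b) requires working up to the equivalence $\equiv$ of Definition~\ref{def:equivalence-channels} (because the visible choice on the right-hand side has output space $\bigsqcup_{i\in\cali}\caly$, which is strictly larger than $\caly$, so $\VChoice{i}{\mu}C_i$ cannot literally equal $C$).

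For part (a), I would simply unfold the definition of the hidden choice operator: $\HChoice{i}{\mu}C_i = \sum_{i\in\cali}\mu(i)\,C_i = \sum_{i\in\cali}\mu(i)\,C = \left(\sum_{i\in\cali}\mu(i)\right)C = C$, using $C_i = C$ and the fact that $\mu$ is a probability distribution. This yields equality of matrices, hence of channels.

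For part (b), the key observation is that although the visible choice tags each output with the index of the branch taken, all branches compute the same $C$, so the tag carries no information about the secret $x$. Concretely, for any prior $\pi\in\dist\calx$, let $V = \VChoice{i}{\mu}C_i$. Unfolding the definitions of visible choice and concatenation, $V(x,(y,i)) = \mu(i)\,C(x,y)$, and hence the joint distribution on $\calx \times (\bigsqcup_{i\in\cali}\caly)$ induced by $\pi$ and $V$ satisfies $p_V(x,(y,i)) = \pi(x)\mu(i)C(x,y)$, with marginal $p_V((y,i)) = \mu(i)\sum_{x}\pi(x)C(x,y) = \mu(i)\,p_C(y)$. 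I would then verify that the posterior $p_{X\mid(y,i)}$ equals $p_{X\mid y}$ whenever $p_V((y,i))>0$, since the factor $\mu(i)$ cancels in the ratio.

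The final step is to compute the posterior vulnerability for any $\vf$:
\[
\postvf{\pi}{V} \;=\; \sum_{(y,i)} p_V((y,i))\,\vf(p_{X\mid(y,i)}) \;=\; \sum_{y\in\caly}\sum_{i\in\cali} \mu(i)\,p_C(y)\,\vf(p_{X\mid y}) \;=\; \sum_{y\in\caly} p_C(y)\,\vf(p_{X\mid y}) \;=\; \postvf{\pi}{C},
\]
again using that $\mu$ sums to $1$. Since this holds for every prior $\pi$ and every vulnerability $\vf$, we conclude $V \equiv C$. I do not expect any serious obstacle: the only delicate point is bookkeeping the tagged output space $\bigsqcup_{i\in\cali}\caly$ carefully, so that the $i$-summation is recognized as a summation of $\mu(i)$ over $\cali$ rather than something involving the secret.
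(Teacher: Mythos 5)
Your proof is correct. Part (a) is exactly the paper's argument: unfold the definition of $\HChoice{i}{\mu}$, use $C_i=C$, and factor out $\sum_i\mu(i)=1$. For part (b) you take a genuinely different route. The paper disposes of it in one line by appealing to Lemma~\ref{lemma:equivalent-channels-conditions}, the column-wise characterization of channel equivalence from the literature: it observes that every column of $\bigconc_i\mu(i)C$ is a convex combination of columns of $C^0$ and vice versa, and concludes $\VChoice{i}{\mu}C_i\equiv C$. You instead verify Definition~\ref{def:equivalence-channels} directly, computing the joint distribution $p_V(x,(y,i))=\pi(x)\mu(i)C(x,y)$, noting that the posterior $p_{X\mid(y,i)}$ coincides with $p_{X\mid y}$ because the tag $i$ is independent of the secret, and summing out $\mu$ in the posterior vulnerability. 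Your argument is more self-contained (it does not lean on an external characterization result, whose "convex combination" condition is in any case the more delicate direction to check here) and it makes the information-theoretic reason for the equivalence explicit; the paper's is shorter given the lemma. Note also that your computation for (b) is essentially the proof of Theorem~\ref{theo:convex-V-q}(b) specialized to a constant family, so you could alternatively have derived (b) as $\postvf{\pi}{\VChoice{i}{\mu}C_i}=\sum_i\mu(i)\postvf{\pi}{C}=\postvf{\pi}{C}$ by citing that linearity result directly.
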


The following properties regard the reorganization of operators, 
and they will be essential in some technical results in which we invert 
the order in which hidden and visible choice are applied in a protocol.

\begin{restatable}[\qm{Reorganization of operators}]{Proposition}{resreorganizationoperators}
\label{prop:reorganization-operators}
Given a family $\{C_{i j}\}_{i \in \cali, j \in \calj}$ of 
channels indexed by sets $\cali$ and $\calj$,
a distribution $\mu$ on $\cali$, and
a distribution $\eta$ on $\calj$:

\begin{enumerate}[(a)]
\item 
${\HChoice{i}{\mu} \;\HChoice{j}{\eta} C_{i j} = \HChoiceDouble{i}{\mu}{j}{\eta} C_{i j}}$, if all $C_{i}$'s have the same type;\\

\item ${\VChoice{i}{\mu} \; \VChoice{j}{\eta} C_{i j} \equiv \VChoiceDouble{i}{\mu}{j}{\eta} C_{i j}}$, if all $C_{i}$'s are compatible; and\\

\item \label{item:c}${\HChoice{i}{\mu} \; \VChoice{j}{\eta} C_{i j} \equiv \VChoice{j}{\eta}\; \HChoice{i}{\mu} C_{i j}}$, 
if, for each $i$, all $C_{i j}$'s have same type $\calx{\times}\caly_{j}{\rightarrow}\reals$.
\end{enumerate}

\end{restatable}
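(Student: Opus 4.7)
The plan is to prove each of the three parts by unfolding the definitions of hidden and visible choice from Section~\ref{sec:operators-definition}, then using standard linearity properties of the sum and careful bookkeeping for the concatenation. Throughout, I will exploit the fact that a channel's posterior vulnerability depends only on the joint distribution over input--output pairs (equivalently, on the collection of posterior distributions together with their marginal weights), and hence is insensitive to any relabeling of output columns.

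For part (a), the argument is a one-line calculation. By the definition of hidden choice,
\[
\HChoice{i}{\mu} \HChoice{j}{\eta} C_{ij}
\;=\; \sum_{i \in \cali} \mu(i)\!\sum_{j\in\calj} \eta(j)\, C_{ij}
\;=\; \sum_{(i,j)\in\cali\times\calj} \mu(i)\,\eta(j)\, C_{ij}
\;=\; \HChoiceDouble{i}{\mu}{j}{\eta} C_{ij},
\]
where the middle equality uses distributivity of scalar multiplication over summation and the fact that all $C_{ij}$'s live in a common vector space of matrices of type $\calx\times\caly\to\reals$ (guaranteed by the hypothesis that all $C_i$'s have the same type). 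This is an equality of channels, not merely an equivalence.

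For part (b), unfolding $\VChoice{i}{\mu}\VChoice{j}{\eta}C_{ij}=\bigconc_{i\in\cali}\mu(i)\bigl(\bigconc_{j\in\calj}\eta(j)\,C_{ij}\bigr)$ yields a channel whose columns are indexed by triples of the form $((y,j),i)$ with $y\in\caly_j$, and whose entry at input $x$ and such a column equals $\mu(i)\,\eta(j)\,C_{ij}(x,y)$. Unfolding $\VChoiceDouble{i}{\mu}{j}{\eta}C_{ij}$ analogously yields a channel whose columns are indexed by pairs $(y,(i,j))$, with exactly the same entry $\mu(i)\,\eta(j)\,C_{ij}(x,y)$. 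Thus the two channels differ only in how their columns are named, and the bijection $((y,j),i)\leftrightarrow(y,(i,j))$ preserves, column by column, both the marginal weight on outputs and the induced posterior distribution on $\calx$. By Definition~\ref{def:equivalence-channels}, this suffices for equivalence, since $\postvf{\pi}{\cdot}$ is defined as a sum over columns of marginal-weighted prior vulnerabilities of posteriors.

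For part (c), I would again unfold both sides. The hypothesis that, for each $i$, all $C_{ij}$'s share type $\calx\times\caly_j\to\reals$ ensures that (i) $\HChoice{i}{\mu}C_{ij}=\sum_i\mu(i)C_{ij}$ is well defined as a channel of type $\calx\times\caly_j\to\reals$ for each $j$, and (ii) for each fixed $i$, the inner visible choice $\VChoice{j}{\eta}C_{ij}$ is well defined and produces a channel of type $\calx\times(\bigsqcup_j\caly_j)\to\reals$, identical across $i$'s, so that the outer hidden choice is legitimate. Both sides then have column-index set $\bigsqcup_j\caly_j$, and the entry at input $x$ and column $(y,j)$ is $\sum_{i}\mu(i)\,\eta(j)\,C_{ij}(x,y)$ on each side, by linearity of the sum. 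Thus the two channels are actually equal, which in particular implies equivalence.

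The only genuine subtlety sits in part (b): I must be careful to treat the nested disjoint-union column sets $\bigsqcup_i\bigl(\bigsqcup_j\caly_j\bigr)$ and $\bigsqcup_{(i,j)}\caly_j$ as distinct but in bijection, and to invoke equivalence rather than equality. Parts (a) and (c) are equalities that follow from linearity and from the output sets lining up on the nose, so I would state those explicitly as equalities in the proof and then note that equivalence follows a fortiori.
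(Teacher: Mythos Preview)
Your proposal is correct and follows the same overall strategy as the paper---unfold the definitions of $\hchoiceop$ and $\vchoiceop$ and reorganize---but it is slightly more elementary in parts (b) and (c). The paper proves (b) and (c) by appealing to the characterization of channel equivalence in Lemma~\ref{lemma:equivalent-channels-conditions} (every column on one side is a convex combination of the zero-column extension of the other). You instead argue (b) directly from Definition~\ref{def:equivalence-channels}, observing that the two channels differ only by the column-relabeling bijection $((y,j),i)\leftrightarrow(y,(i,j))$, which manifestly preserves each column's marginal weight and posterior; and for (c) you go further and note that both sides have the \emph{same} output set $\bigsqcup_{j}\caly_{j}$ with identical entries $\sum_i \mu(i)\,\eta(j)\,C_{ij}(x,y)$, so they are equal as matrices, not merely equivalent. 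This sharpening of (c) is valid under the stated hypothesis (the $\caly_j$'s do not depend on $i$), and it saves you from invoking Lemma~\ref{lemma:equivalent-channels-conditions} at all. The paper's route via the lemma is more uniform but slightly heavier; yours is more direct and yields a stronger conclusion for (c).
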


\subsection{Properties of vulnerability w.r.t. channel operators}
\label{sec:convexity-vulnerability}

We now derive some relevant properties of vulnerability w.r.t. our 
channel operators, which will be later used to obtain the Nash equilibria 
in information leakage games with different choice operations.

The first result states that posterior vulnerability is 
convex w.r.t. hidden choice (this result was already presented in \cite{Alvim:17:GameSec}), and 
linear w.r.t. to visible choice.

\begin{restatable}{Theorem}{resconvexVq}
\label{theo:convex-V-q}
Let $\{C_{i}\}_{i \in \cali}$ be a family of channels,
and $\mu$ be a distribution on $\cali$.
Then, for every distribution $\pi$ on $\calx$, and every
vulnerability $\vf$:
\begin{enumerate}[(a)]

\item \label{enumerate:vg:h:convex}
posterior vulnerability is convex w.r.t. to hidden choice:
$
\postvf{\pi}{\HChoice{i}{\mu} C_{i}} \leq \sum_{i \in \cali} \mu(i) \,\postvf{\pi}{C_{i}}
$
if all $C_{i}$'s have the same type.

\item \label{enumerate:vg:v:linear}
posterior vulnerability is linear w.r.t. to visible choice:
$
\postvf{\pi}{\VChoice{i}{\mu} C_{i}} = \sum_{i \in \cali} \mu(i) \,\postvf{\pi}{C_{i}}
$
if all $C_{i}$'s are compatible.
\end{enumerate}
\end{restatable}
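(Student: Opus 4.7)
The plan is to prove both parts by unfolding the definition $\postvf{\pi}{C} = \sum_{y} p(y)\,\priorvf{p_{X \mid y}}$ and tracking how the induced joint distribution on $\calx \times \caly$ transforms under each operator; the convexity of $\vf$ on $\dist{\calx}$ is the only external fact I will need.

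For part (a), I would first expand $(\HChoice{i}{\mu} C_{i})(x,y) = \sum_{i} \mu(i)\, C_{i}(x,y)$ and write the induced joint as $p(x,y) = \sum_i \mu(i)\, p_{C_i}(x,y)$, so that the output marginal satisfies $p(y) = \sum_i \mu(i)\, p_{C_i}(y)$. The key observation is then that, by Bayes' rule, the posterior on inputs given $y$ is a convex combination of the per-channel posteriors, namely $p_{X\mid y} = \sum_i \lambda_i(y)\, p_{X\mid y}^{C_i}$ with $\lambda_i(y) = \mu(i)\, p_{C_i}(y)/p(y)$ non-negative and summing to $1$. Applying convexity of $\vf$ pointwise for each $y$ gives $\priorvf{p_{X\mid y}} \leq \sum_i \lambda_i(y)\, \priorvf{p_{X\mid y}^{C_i}}$; multiplying both sides by $p(y)$ cancels the denominators, and a swap of the double sum yields $\postvf{\pi}{\HChoice{i}{\mu} C_i} \leq \sum_i \mu(i)\, \postvf{\pi}{C_i}$, as required.

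For part (b), I would exploit the fact that visible choice \emph{tags} the output with the channel index: $(\VChoice{i}{\mu} C_i)(x,(y,i)) = \mu(i)\, C_i(x,y)$, so the joint distribution on $\calx \times \bigsqcup_i \caly_i$ factors as $p(x,(y,i)) = \mu(i)\, p_{C_i}(x,y)$, the output marginal becomes $p((y,i)) = \mu(i)\, p_{C_i}(y)$, and crucially the posterior reduces to $p_{X\mid (y,i)} = p_{X\mid y}^{C_i}$ because the $\mu(i)$ factor cancels between numerator and denominator. Substituting into the definition gives
\[
\postvf{\pi}{\VChoice{i}{\mu} C_i} \;=\; \sum_i \sum_y \mu(i)\, p_{C_i}(y)\, \priorvf{p_{X\mid y}^{C_i}} \;=\; \sum_i \mu(i)\, \postvf{\pi}{C_i},
\]
as an \emph{equality}, since no merging of posteriors across indices ever takes place.

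The main obstacle is conceptual rather than computational: one must keep clearly in mind why hidden choice forces an invocation of the convexity inequality (distinct indices $i$ that produce the same output $y$ are glued into a single posterior before $\vf$ is evaluated), whereas visible choice keeps these contributions indexed separately (so the averaging happens \emph{outside} $\vf$, giving linearity). Once this distinction is isolated, the remainder is bookkeeping with conditional probabilities, using only the definition of posterior vulnerability and Definition/Propositions of the choice operators already established.
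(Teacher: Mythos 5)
Your proof is correct and follows essentially the same route as the paper's: unfold the definition of posterior vulnerability, apply convexity of $\vf$ to the $y$-wise posterior of the hidden choice, and observe that the index tag on outputs makes the visible-choice sum split exactly into $\sum_i \mu(i)\,\postvf{\pi}{C_i}$. If anything, your use of the properly normalized convex coefficients $\lambda_i(y)=\nicefrac{\mu(i)\,p_{C_i}(y)}{p(y)}$ is slightly more careful than the paper's appendix, which applies convexity of $\vf$ directly to the vectors $\nicefrac{\pi(\cdot)C_i(\cdot,y)}{p(y)}$ without noting that these are not individually normalized distributions.
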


The next result is concerned with posterior vulnerability 
under the composition of channels using both operators.

\begin{restatable}{Corollary}{resconcaveconvexV}
\label{cor:concave-convex-V}
Let $\{C_{i j}\}_{i \in \cali, j \in \calj}$ be a family of channels,
all with domain $\calx$ and with the same type, and let $\pi\in \distr\calx$, and $\vf$ be any vulnerability. 
Define
$\Pay: \distr\cali\times\distr\calj\rightarrow \reals$ as follows:
$
\Pay(\mu, \eta) \eqdef \postvf{\pi}{\HChoice{i}{\mu} \; \VChoice{j}{\eta} \; C_{i j}}
$.
Then $\Pay$ is convex on $\mu$ and linear on $\eta$.
\end{restatable}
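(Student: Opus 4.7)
The plan is to combine the reorganization identity from Proposition~\ref{prop:reorganization-operators}(c) with the convexity/linearity properties of posterior vulnerability in Theorem~\ref{theo:convex-V-q}. The hypothesis that every $C_{ij}$ has the same type $\calx \times \caly \to \reals$ guarantees that all compatibility and type-matching conditions required below are satisfied uniformly.

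I would first establish linearity in $\eta$ for a fixed $\mu \in \distr\cali$. Since for each $i$ all $C_{ij}$'s share the same type, Proposition~\ref{prop:reorganization-operators}(c) gives $\HChoice{i}{\mu}\VChoice{j}{\eta} C_{ij} \equiv \VChoice{j}{\eta}\HChoice{i}{\mu} C_{ij}$, so by Definition~\ref{def:equivalence-channels} both sides yield the same posterior vulnerability. Setting $E_j \eqdef \HChoice{i}{\mu} C_{ij}$, the family $\{E_j\}_{j \in \calj}$ consists of compatible channels, so applying Theorem~\ref{theo:convex-V-q}(b) to the outer visible choice yields
$$\Pay(\mu, \eta) \;=\; \postvf{\pi}{\VChoice{j}{\eta} E_j} \;=\; \sum_{j \in \calj} \eta(j)\, \postvf{\pi}{E_j},$$
which is an affine (hence linear) function of $\eta$.

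For convexity in $\mu$ I would fix $\eta$ and exploit the formula just derived: $\Pay(\cdot,\eta)$ is a non-negative combination of the maps $\mu \mapsto \postvf{\pi}{\HChoice{i}{\mu} C_{ij}}$, so it suffices to prove convexity of each such map. Given $\mu = \lambda \mu_1 + (1-\lambda)\mu_2$, the defining summation of hidden choice gives
$$\HChoice{i}{\mu} C_{ij} \;=\; \lambda\bigl(\HChoice{i}{\mu_1} C_{ij}\bigr) \,+\, (1-\lambda)\bigl(\HChoice{i}{\mu_2} C_{ij}\bigr),$$
exhibiting $\HChoice{i}{\mu} C_{ij}$ itself as a binary hidden choice between two channels of the same type; Theorem~\ref{theo:convex-V-q}(a) then delivers the inequality $\postvf{\pi}{\HChoice{i}{\mu} C_{ij}} \leq \lambda\postvf{\pi}{\HChoice{i}{\mu_1} C_{ij}} + (1-\lambda)\postvf{\pi}{\HChoice{i}{\mu_2} C_{ij}}$, which is exactly convexity. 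The only real obstacle is bookkeeping: each invocation of Proposition~\ref{prop:reorganization-operators}(c) and Theorem~\ref{theo:convex-V-q} must have its type hypotheses checked, but the common-type assumption on $\{C_{ij}\}$ discharges them all, so no substantive difficulty remains.
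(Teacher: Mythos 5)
Your proof is correct, and for the linearity half it is the same as the paper's: commute the two operators via Proposition~\ref{prop:reorganization-operators}(c) and then apply Theorem~\ref{theo:convex-V-q}(b) to the outer visible choice. For the convexity half you take a genuinely different (and more complete) route. The paper's proof of that half consists of the single step $\postvf{\pi}{\HChoice{i}{\mu}\VChoice{j}{\eta}C_{ij}} \leq \sum_{i}\mu(i)\,\postvf{\pi}{\VChoice{j}{\eta}C_{ij}}$, obtained by applying Theorem~\ref{theo:convex-V-q}(a) to the outer hidden choice; this bounds $\Pay(\cdot,\eta)$ by the affine interpolation of its values at the vertices of $\distr\cali$, which is a consequence of convexity but does not by itself imply it. Your argument supplies the missing two-point Jensen inequality: you observe that $\mu\mapsto\HChoice{i}{\mu}C_{ij}$ is affine as a matrix-valued map, so for $\mu=\lambda\mu_1+(1-\lambda)\mu_2$ the channel $\HChoice{i}{\mu}C_{ij}$ is literally the binary hidden choice of $\HChoice{i}{\mu_1}C_{ij}$ and $\HChoice{i}{\mu_2}C_{ij}$ (two channels of the same type, by Proposition~\ref{prop:type-hidden-choice}), and Theorem~\ref{theo:convex-V-q}(a) applied to that binary choice gives exactly $\postvf{\pi}{\HChoice{i}{\mu}C_{ij}}\leq\lambda\,\postvf{\pi}{\HChoice{i}{\mu_1}C_{ij}}+(1-\lambda)\,\postvf{\pi}{\HChoice{i}{\mu_2}C_{ij}}$; summing against the nonnegative weights $\eta(j)$ in the linearity formula you derived first then yields convexity of $\Pay(\cdot,\eta)$. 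The two proofs thus rest on the same two results, but yours reorganizes first and then argues pointwise in $j$, and in doing so it actually establishes the stated convexity, whereas the paper's one-line version leaves a gap (one the authors themselves flag by deferring to the corrected full version).
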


\section{Information leakage games}
\label{sec:games-setup}
In this section we present our framework for  reasoning about information leakage, 
extending the notion of \emph{information leakage games} proposed in \cite{Alvim:17:GameSec} from only simultaneous games with hidden choice to 
both simultaneous and sequential games, with either hidden or visible choice.

In an information leakage game the defender tries to minimize the leakage of information 
from the system, while the attacker tries to maximize it.
In  this basic scenario, their goals are just opposite (zero-sum). 
Both of them can influence the execution and the observable behavior of the system via a 
specific set of actions.
We assume players to be rational (i.e., they are able to figure out what is 
the best strategy to 
maximize their expected payoff), and that the set of actions and the
payoff function are common knowledge. 

Players choose their own strategy, which in general may be mixed 
(i.e. probabilistic), and choose their action by a random draw 
according to that strategy. 
After both players have performed their actions, the system runs 
and produces some output value which is visible to the attacker and
may leak some information about the secret.
The amount of leakage constitutes the attacker's gain, and the defender's loss. 

To quantify the  leakage we model the system as an information-theoretic channel  
(cf. Section~\ref{subsec:qif}). We recall that leakage is defined as the difference (additive leakage) 
or the ratio (multiplicative leakage) between posterior and prior vulnerability. 
Since we are only interested in comparing the leakage of different channels
for a given prior, 
\emph{we will define the payoff just as the posterior vulnerability},
as the value of prior vulnerability will be the same for every channel.

\subsection{Defining information leakage games}
\label{subsec:def:leakage-game}

An (\emph{information}) \emph{leakage game}
consists of:
(1) two nonempty sets $\cald$, $\cala$ of defender's and attacker's actions  
respectively,
(2) a function $C: \cald{\times}\cala \rightarrow (\calx{\times}\caly \rightarrow \reals)$ 
that associates to each pair of actions  $(d,a)\in\cald{\times}\cala$ a channel 
$C_{da}: \calx{\times}\caly \rightarrow \reals$, 
(3) a prior $\pi \in \dist\calx$ on secrets, and 
(4) a vulnerability measure $\vf$.  
%
The payoff function $u: \cald{\times}\cala \rightarrow\reals$ for pure strategies 
is defined as
$
u(d,a) \eqdef \postvf{\pi}{C_{da}}
$.
We have only one payoff function because the game is zero-sum. 

Like in traditional game theory, the order of actions and the extent by which a player knows the move performed by the opponent play a critical role in deciding strategies and determining the payoff. 
In security, however, knowledge of the opponent's move affects the game in yet another way:
the effectiveness of the attack, i.e., the amount of leakage, depends crucially on whether or not
the attacker knows what channel is being used. 
It is therefore convenient to distinguish two phases in the leakage game: 


\begin{description}
\item[\textbf{Phase 1:}] Each player determines the most convenient strategy (which in general is mixed) for himself, and draws his action accordingly. One of the players may commit first to his action, and his choice may or may not be revealed to the follower. In general, knowledge of the leader's action may help the follower choose a more advantageous strategy. 

\item[\textbf{Phase 2:}]  The attacker observes the output of the selected channel $C_{da}$ and performs his attack on the secret. In case he knows the defender's action, he is able to determine the exact channel $C_{da}$ being used (since, of course, the attacker knows his own action), 
and his payoff will be the posterior vulnerability $\postvf{\pi}{C_{da}}$.
However, if the attacker does not know exactly which channel has been used, then his payoff will be smaller.  
\end{description}

Note that the issues raised in Phase 2 are typical of leakage games; 
they do not have a correspondence (to the best of our knowledge) 
in traditional game theory. 
On the other hand, these issues are central to security, as they 
reflect the principle of preventing the attacker from inferring 
the secret by obfuscating the link between secret and observables. 

Following the above discussion, we consider various possible 
scenarios for games, along two lines of classification.
First, there are three possible orders for the two players' actions.
\begin{description}
\item[{\bf Simultaneous:}]
The players choose (draw) their actions in parallel, each without knowing the choice of the other.
\item[{\bf Sequential, defender-first:}]
The defender draws an action, and commits to it,  before the attacker does.
\item[{\bf Sequential, attacker-first:}]
The attacker draws an action, and commits to it,  before the defender does.
\end{description}
Note that these sequential games may present imperfect information
(i.e., the follower may not know the leader's action).

Second, the visibility of the defender's action during the attack may vary:
\begin{description}
\item[{\bf Visible choice:}] 
The attacker knows the defender's action when he observes the output of the channel, and therefore 
he knows which channel is being used.
Visible choice is modeled by the operator $\vchoiceop$.  
\item[{\bf Hidden choice:}]
The attacker does not know the defender's action when he observes the output of the channel, and therefore in general he does not exactly know which channel is used (although in some special cases he may infer it from the output).  
Hidden choice is modeled by the operator $\hchoiceop$. 
\end{description}

Note that the distinction between sequential and simultaneous games
is orthogonal to that between visible and hidden choice.
Sequential and simultaneous games model whether or not, respectively,
the follower's choice can be affected by knowledge of the leader's action.
This dichotomy captures how knowledge about the other player's actions 
can \emph{help a player choose his own action}.
On the other hand, visible and hidden choice capture whether or not, respectively, the 
attacker is able to fully determine the channel representing the system, 
\emph{once defender and attacker's actions have already been fixed}.
This dichotomy reflects the different \emph{amounts of information leaked} by 
the system as viewed by the adversary.
For instance, in a simultaneous game neither player can choose his action based on 
the choice of the other.
However, depending on whether or not the defender's choice is visible, 
the adversary will or will not, respectively, be able to completely recover 
the channel used, which will affect the amount of leakage.

If we consider also the subdivision of sequential games into perfect and imperfect information, 
there are $10$ possible different combinations. 
Some, however, make little sense. 
For instance,  defender-first sequential game with perfect information 
(by the attacker) 
does not combine naturally with hidden choice
$\hchoiceop$, 
since that would mean that the attacker knows the action of the defender 
and choses his strategy accordingly, but forgets it at the moment of the attack. 
(We assume \emph{perfect recall}, i.e., the players never forget what they have learned.)
Yet other combinations are not interesting, such as the attacker-first sequential game with 
(totally) imperfect information (by the defender), since it coincides with the simultaneous-game case. 
Note that attacker and defender are not symmetric with respect to hiding/revealing their actions $a$ and $d$, since the knowledge of $a$ affects the game only in the usual sense of game theory, while the knowledge of $d$ also affects the computation of the payoff (cf.  \qm{Phase 2} above). 

Table~\ref{table:games} lists the meaningful and interesting combinations. 
In Game V we assume imperfect information: the attacker does not 
know the action chosen by the defender. 
In all the other sequential games we assume that the follower has perfect 
information.
In the remaining of this section, we discuss each game individually, using the example of Section~\ref{sec:running-example} as running example. 

\begin{table}[!tb]
\centering
\begin{small}
\renewcommand{\arraystretch}{1.35}
\begin{tabular}{c|c|c|c|c|}
\multicolumn{2}{}{} & \multicolumn{3}{c}{Order of action} \\ \cline{3-5}
\multicolumn{2}{}{} & \multicolumn{1}{|c|}{\textbf{simultaneous}} & \textbf{defender 1\textsuperscript{st}} & \textbf{attacker 1\textsuperscript{st}} \\ \cline{2-5}
\multirow{2}{*}{\stackanchor{Defender's}{choice}} & \textbf{visible $\vchoiceop$} & Game I & Game II & Game III \\ \cline{2-5}
&  {\textbf{hidden $\hchoiceop$}} & Game IV & Game V & Game VI \\ \cline{2-5}
\end{tabular}
\renewcommand{\arraystretch}{1}
\vspace{3mm}
\caption{{\rm Kinds of games we consider.}
{\rm All sequential games have perfect information, except for game V.}}
\vspace{-8mm}
\label{table:games}
\end{small}
\end{table}

\subsubsection{Game I (simultaneous with visible choice)}

This simultaneous game can be represented by a tuple $(\cald,\, \cala,\, \pay)$. 
As in all games with visible choice $\vchoiceop$, 
the expected payoff $\Pay$ of a mixed strategy profile $(\delta,\alpha)$ 
is defined to be the expected value of $u$, as in traditional game theory: 
$
\Pay(\delta,\alpha) 
\eqdef {\expectDouble{d\leftarrow\delta}{a\leftarrow\alpha}\hspace{-0.5ex}  \pay(d, a)}
=\hspace{-0.5ex} \sum_{\substack{d\in\cald\\ a\in\cala}} \delta(d) \,\alpha(a)\, \pay(d, a)
$,
where we recall that $\pay(d,a) = \postvf{\pi}{C_{da}}$. 

From Theorem~\ref{theo:convex-V-q}(\ref{enumerate:vg:v:linear}) we derive:
$\Pay(\delta,\alpha) 
= \postvf{\pi}{\VChoiceDouble{d}{\delta}{a}{\alpha} C_{da}} 
$.
hence the whole system can be equivalently regarded as 
the channel $\VChoiceDouble{d}{\delta}{a}{\alpha} C_{da}$. 
Still from Theorem~\ref{theo:convex-V-q}(\ref{enumerate:vg:v:linear}) we 
can derive that $\Pay(\delta,\alpha)$ is linear in $\delta$ and $\alpha$. 
Therefore the Nash equilibrium can be computed using the minimax method 
(cf. Section~\ref{subsec:game-theory}). 
\begin{Example}
Consider the example of Section~\ref{sec:running-example} in the setting of Game I.
The Nash equilibrium $(\delta^*,\alpha^*)$ can be obtained using the closed formula 
from Section \ref{subsec:game-theory},
and it is given by
$
\delta^*(0) =\alpha^*(0) =\nicefrac{(\nicefrac{2}{3}-1)}{(\nicefrac{1}{2}-1-1+\nicefrac{2}{3})}=\nicefrac{2}{5}.
$
The corresponding payoff is
$
\Pay(\delta^*,\alpha^*)= \nicefrac{2}{5}\,\nicefrac{2}{5}\,\nicefrac{1}{2}+\nicefrac{2}{5}\,\nicefrac{3}{5} + \nicefrac{3}{5}\,\nicefrac{2}{5} + \nicefrac{3}{5}\,\nicefrac{3}{5}\,\nicefrac{2}{3}= \nicefrac{4}{5}
$.
\end{Example}

\subsubsection{Game II (defender 1\textsuperscript{st} with visible choice)}

This defender-first sequential game can be represented by a tuple $(\cald,\, \calda,\, \pay)$. 
A mixed strategy profile is of the form $(\delta, \msa)$, with $\delta\in\distr\cald$ and $\msa\in\distr(\calda)$, and the corresponding payoff is
$
\Pay(\delta,\msa) 
\eqdef {\expectDouble{d\leftarrow\delta}{\psa\leftarrow\msa} \pay(d, \psa(d))} \allowbreak
= \allowbreak \sum_{\substack{d\in\cald\\ \psa:\calda}} \delta(d)\, \msa(\psa)\, \pay(d, \psa(d))
$,
where  $\pay(d, \psa(d)) = \postvf{\pi}{C_{d\psa(d)}}$. \\[1mm]

Again, from Theorem~\ref{theo:convex-V-q}(\ref{enumerate:vg:v:linear}) we derive:
$
\textstyle
\Pay(\delta,\msa) 
= \postvf{\pi}{\VChoiceDouble{d}{\delta}{\psa}{\msa}  C_{d \psa(d)}} 
$
and hence the system can be expressed as channel 
$\textstyle \VChoiceDouble{d}{\delta}{\psa}{\msa} C_{d \psa(d)}$.
From the same Theorem we also derive that $\Pay(\delta,\msa)$ is 
linear in $\delta$ and $\msa$, so the mutually optimal strategies can 
be obtained again by solving the minimax problem.
In this case, however, the solution is particularly simple, because it 
is  known that there are optimal strategies which are deterministic.
Hence it is sufficient for the defender to find  the action $d$ which 
minimizes $\max_a \pay(d,a)$. 

\begin{Example}
Consider  the example of Section~\ref{sec:running-example} in the setting of Game II.
If the defender chooses $0$ then the attacker chooses $1$. 
If the defender chooses $1$ then the attacker chooses $0$.
In both cases, the payoff is $1$. The game has therefore two solutions, $(0,1)$ and $(1,0)$.
\end{Example}

\subsubsection{Game III (attacker 1\textsuperscript{st} with visible choice)}

This game is also a sequential game, but with the attacker as the leader. 
Therefore it can be represented as tuple of the form $(\calad,\,  \cala,\, \pay)$. 
It is the same as Game II, except that the roles of the attacker and the 
defender are inverted. 
In particular, the payoff of a mixed strategy profile 
$(\msd, \alpha)\in \distr(\calad)\times \distr\cala$ is given by
$
\Pay(\msd,\alpha)  
\eqdef {\expectDouble{\psd\leftarrow\msd}{a\leftarrow\alpha}\hspace{-0.5ex}
  \pay(\psd(a),a)} = {\sum_{\substack{\psd:\calad \\ a\in\cala}}} \msd(\psd) 
  \, \alpha(a)\, \pay(\psd(a), a) \allowbreak 
= \allowbreak {\postvf{\pi}{\VChoiceDouble{\psd}{\msd}{a}{\alpha}  C_{\psd(a) a}} }
$,
%
and  the whole system can be equivalently regarded as channel 
$\VChoiceDouble{\psd}{\msd}{a}{\alpha}  C_{\psd(a) a}$. 
Obviously, also  in this case the minimax problem has  a deterministic solution. 

In summary, in the sequential case, whether the leader is the defender or the attacker
(Games II and III, respectively), the minimax problem has always a deterministic 
solution~\cite{Osborne:94:BOOK}.

\begin{Theorem}
\label{theo:deterministic-strategies}
In a defender-first sequential game with visible choice, there exist 
$d\in\cald$ and $a\in\cala$ such that, 
for every $\delta\in\distr\cald$ and $\msa\in\distr(\calda)$ we have:
$
\Pay(d,\msa)\leq \pay(d,a)\leq \Pay(\delta,a)
$.
Similarly, in an attacker-first sequential game with visible choice, there exist 
$d\in\cald$ and $a\in\cala$ such that, 
for every $\msd\in\distr(\calad)$ and $\alpha\in\distr\cala$ we have:
$
\Pay(d,\alpha)\leq \pay(d,a)\leq \Pay(\msd,a)
$.
\end{Theorem}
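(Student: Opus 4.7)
The plan is to construct an explicit witness pair $(d^*,a^*)\in\cald\times\cala$ by backward induction on the sequential game and then verify the two saddle-point inequalities; the attacker-first case follows symmetrically. I first flag an interpretive point: the quantity $\Pay(\delta,a)$ with $a\in\cala$ does not lie in the stated domain $\distr\cald\times\distr(\calda)$ of $\Pay$ introduced in Section~\ref{subsec:game-theory}, so one must specify how $a\in\cala$ embeds into $\calda$. The naive constant-function embedding $\psa\equiv a^*$ would make the theorem equivalent to the existence of a pure Nash equilibrium of the simultaneous matrix game $(d,a)\mapsto\pay(d,a)$, which in general fails---already the paper's own running example in the setting of Game~II admits no pure Nash of the simultaneous matrix game, even though the sequential game has the pure solutions $(0,1)$ and $(1,0)$ mentioned in the text. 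The embedding that makes the theorem correct, and that is consistent with the sentence preceding the theorem (``the minimax problem has always a deterministic solution''), is to identify $a^*$ with a best-response map $\psa^*\in\calda$ satisfying $\psa^*(d^*)=a^*$; I proceed under this reading.

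For the construction, finiteness of $\cala$ lets me choose $\psa^*(d)\in\argmax_{a}\pay(d,a)$ for each $d\in\cald$ via a deterministic tie-breaking rule, yielding a pure attacker strategy $\psa^*\in\calda$. Setting $d^*\in\argmin_{d}\pay(d,\psa^*(d))=\argmin_d\max_a\pay(d,a)$ and $a^*:=\psa^*(d^*)$, I obtain $\pay(d^*,a^*)=\min_d\max_a\pay(d,a)$, the Stackelberg value of the defender-first game.

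The left inequality is an expectation-vs-maximum bound: for any $\msa\in\distr(\calda)$, unfolding the definition of $\Pay$ gives $\Pay(d^*,\msa)=\sum_{\psa}\msa(\psa)\,\pay(d^*,\psa(d^*))\leq\max_{a'\in\cala}\pay(d^*,a')=\pay(d^*,a^*)$, since each $\psa(d^*)\in\cala$ and $a^*$ is by construction a maximizer of $\pay(d^*,\cdot)$. The right inequality, read as $\pay(d^*,a^*)\leq\Pay(\delta,\psa^*)$ under the embedding above, is the dual expectation-vs-minimum bound: $\Pay(\delta,\psa^*)=\sum_d\delta(d)\,\pay(d,\psa^*(d))=\sum_d\delta(d)\max_{a'}\pay(d,a')\geq\min_d\max_{a'}\pay(d,a')=\pay(d^*,a^*)$.

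The attacker-first case is obtained by role reversal: for each $a\in\cala$ pick $\psd^*(a)\in\argmin_d\pay(d,a)$, then choose $a^*\in\argmax_{a}\pay(\psd^*(a),a)=\argmax_a\min_d\pay(d,a)$ and set $d^*:=\psd^*(a^*)$, reading $d^*$ in $\Pay(d^*,\alpha)$ as $\psd^*$; the two inequalities follow by the same extremum-vs-expectation arguments. The main obstacle is thus not the mechanical verification of the inequalities but the interpretive step at the outset: the type mismatch in $\Pay(\delta,a)$ as written in the statement must be resolved by reading $a$ as the canonical sequential best-response strategy extending its on-path value, because under the literal constant-action reading the theorem is false already in the paper's running example.
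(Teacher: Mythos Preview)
The paper does not supply its own proof of this theorem; it simply attributes the fact that sequential zero-sum games admit deterministic solutions to a standard textbook (the citation immediately preceding the theorem). So there is no detailed argument against which to compare your construction.

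Your interpretive concern is well-founded and is the substantive content of your write-up. The expression $\Pay(\delta,a)$ with $a\in\cala$ is indeed ill-typed in the defender-first sequential game as set up in Section~\ref{subsec:game-theory}, and you are right that the constant-function embedding $\psa\equiv a^*$ would make the right inequality assert that $d^*$ is a best response to $a^*$ in the simultaneous matrix game---which, combined with the left inequality, would demand a pure Nash equilibrium of that game. Your check against the running example is correct: for the stated Game~II solution $(d^*,a^*)=(0,1)$ the constant reading would require $1\leq\delta(0)\cdot 1+\delta(1)\cdot\nicefrac{2}{3}$ for all $\delta$, which fails when $\delta$ is concentrated on $d=1$. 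The paper's abstract itself acknowledges ``minor mistakes'' corrected in the full version, and this type mismatch is a plausible instance.

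Under the best-response-function reading you adopt (identifying $a^*$ with the map $\psa^*$ satisfying $\psa^*(d)\in\argmax_a\pay(d,a)$), your argument is correct and is exactly the standard backward-induction construction: the left inequality bounds an expectation by a maximum, the right inequality bounds an expectation by a minimum, and the attacker-first case is symmetric. So your proof is sound once the statement is repaired; its real value is the diagnosis of the statement rather than the routine verification that follows.
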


\begin{Example}
Consider now the example of Section~\ref{sec:running-example} in the setting of Game III.
If the attacker chooses $0$ then the defender chooses $0$ and the payoff is $\nicefrac{1}{2}$.  
If the attacker chooses $1$ then the defender  chooses $1$ and the payoff is $\nicefrac{2}{3}$.
The latter case is more convenient for the attacker, hence the solution of the game is the strategy profile $(1,1)$. 
\end{Example}

\subsubsection{Game IV (simultaneous with hidden choice)}

This game is a tuple $(\cald,\cala,\pay)$.
However, \emph{it is not an ordinary game} in the sense that 
\emph{the payoff a mixed strategy profile cannot be defined by averaging
the payoff of the corresponding pure strategies}. 
More precisely, the payoff of a mixed profile is defined by
averaging on the strategy of the attacker, but not on that 
of the defender. 
In fact, when hidden choice is used, there is an additional level 
of uncertainty in the relation between the observables and the secret 
from the point of view of the attacker, since he is not sure about which 
channel is producing those observables. 
A mixed strategy $\delta$ for the defender produces a convex combination 
of  channels  (the channels associated to the pure strategies) with the 
same coefficients, and we know from previous sections that  the vulnerability 
is a convex function of the channel, and in general is not linear. 

In order to define the payoff of a mixed strategy  profile  $(\delta,\alpha)$, 
we need therefore to consider the channel that the attacker perceives given his 
limited knowledge. Let us assume that the action that the attacker draws from 
$\alpha$  is $a$. 
He does not know the action of the defender, but we can assume 
that he knows his strategy (each player can derive the optimal strategy of the 
opponent, under the assumption of common knowledge and rational players).

The channel the attacker will see is 
$\HChoice{d}{\delta} C_{d a}$, obtaining a corresponding payoff of 
$\postvf{\pi}{{\HChoice{d}{\delta}} {C_{d a}}}$.
By averaging on the strategy of the attacker we obtain 
$
\Pay(\delta,\alpha) 
\eqdef  
{\expectDouble{a\leftarrow\alpha}{}\hspace{-1ex} \; \postvf{\pi}{{\HChoice{d}{\delta}} {C_{d a}}}}
= \sum_{a\in\cala} \,\alpha(a)\, \postvf{\pi}{{\HChoice{d}{\delta}} {C_{d a}}}
$.
From Theorem~\ref{theo:convex-V-q}(\ref{enumerate:vg:v:linear}) we derive:
$
\Pay(\delta,\alpha) 
= \postvf{\pi}{\VChoice{a}{\alpha}\, {\HChoice{d}{\delta}} {C_{d a}}} 
$
and hence the whole system can be equivalently regarded as channel 
${\VChoice{a}{\alpha}\, {\HChoice{d}{\delta}} {C_{d a}}}$.
Note that, by Proposition~\ref{prop:reorganization-operators}(\ref{item:c}), 
the order of the operators is interchangeable, and the system can be 
equivalently regarded as $ {\HChoice{d}{\delta}}\,{\VChoice{a}{\alpha} {C_{d a}}}$.
This shows the robustness of this model. 

From Corollary~\ref{cor:concave-convex-V} we  derive that $\Pay(\delta,\alpha)$ is convex in $\delta$ and linear in $\eta$, hence we can compute the Nash equilibrium by the minimax method.

\begin{Example}\label{exa:exempio}
Consider now the example of Section~\ref{sec:running-example} in the setting of Game IV.
For $\delta\in\distr\cald$ and $\alpha\in\distr\cala$, let $p=\delta(0)$ and $q=\alpha(0)$. 
The system can be represented by the channel 
$(C_{00}\hchoice{p} C_{10})\vchoice{q}(C_{01}\hchoice{p} C_{11})$
represented below.
\begin{align*}
\begin{small}
\begin{array}{|c|c|c|}
\hline
C_{00}  \hchoice{p} C_{10}& y=0 & y=1 \\ \hline
x=0    & p & \bar{p} \\
x=1    & 1 & 0 \\ \hline
\end{array}
\quad
\vchoice{q}
\quad
\begin{array}{|c|c|c|}
\hline
C_{01}\hchoice{p} C_{11}& y=0 & y=1 \\ \hline
x=0    & \nicefrac{1}{3}+ \nicefrac{2}{3}\;p &   \nicefrac{2}{3}- \nicefrac{2}{3}\;p \\
x=1    & \nicefrac{2}{3}- \nicefrac{2}{3}\;p  & \nicefrac{1}{3}+ \nicefrac{2}{3}\,p \\ \hline
\end{array}
\end{small}
\end{align*}
\noindent For uniform $\pi$, we have
$\postvf{\pi}{ C_{00} \hchoice{p} C_{10}}{=}1-\nicefrac{1}{2}$;
and 
$\postvf{\pi}{ C_{10}  \hchoice{p} C_{11}}$ is equal to
$
\nicefrac{2}{3}- \nicefrac{2}{3}\;p$ if $p\leq \nicefrac{1}{4}$,
and equal to 
$\nicefrac{1}{3}+ \nicefrac{2}{3}\;p$ 
if $p> \nicefrac{1}{4}$.
Hence the payoff, expressed in terms of $p$ and $q$, is
$\Pay(p,q) = q(1-\nicefrac{1}{2}) + \bar{q} (\nicefrac{2}{3} - \nicefrac{2}{3}\;p)$
if $p\leq \nicefrac{1}{4}$, and 
$\Pay(p,q) = q(1-\nicefrac{1}{2}) + \bar{q} (\nicefrac{1}{3} + \nicefrac{2}{3}\;p)$ 
if $p> \nicefrac{1}{4}$.
The Nash equilibrium $(p^*,q^*)$ is given by
$
p^* = \argmin_p\max_q 	\Pay(p,q)
$
and 
$q^* = \argmax_q\min_p 	\Pay(p,q)
$, and by solving the above, we obtain
$p^* = q^*= \nicefrac{4}{7}$.

\end{Example}

\subsubsection{Game V (defender 1\textsuperscript{st} with hidden choice)}

This is a defender-first sequential game with imperfect information, hence it can be represented as a tuple of the form  $(\cald,\, \infoseta\rightarrow\cala,\, \payd, \paya)$, where $\infoseta$ is a partition of $\cald$. 
Since we are assuming perfect recall, and the attacker does not know anything about the 
action chosen by the defender in Phase 2, i.e., at the moment of the attack (except the probability distribution determined by his strategy), we must assume that the attacker does not know anything in Phase 1 either. Hence the indistinguishability relation must be total, i.e.,  $\infoseta=\{\cald\}$. But $\{\cald\}\rightarrow\cala$ is equivalent to $\cala$, hence this kind of game is equivalent to Game IV.

It is also a well known fact in Game theory that when in a sequential game 
the follower does not know the leader's move before making his choice, 
the game is equivalent to a simultaneous game.\footnote{However, one could
argue that, since the defender has already committed, the attacker does not
need to perform the action corresponding to the Nash equilibrium, any 
payoff-maximizing solution would be equally good for him.}

\subsubsection{Game VI (attacker 1\textsuperscript{st} with hidden choice)}

This game is also a sequential game with the attacker as the leader, hence it is a tuple of the form
$(\calad,\,  \cala,\, \pay)$. 
It is  similar to Game III, except that the payoff is convex on the strategy of the defender, instead of linear. 
The payoff of the mixed strategy profile  $(\msd, \alpha)\in \distr(\calad)\times \distr\cala$  is
$
\Pay(\msd,\alpha)  
\eqdef {\expectDouble{a\leftarrow\alpha}{}\hspace{-1ex} \; \postvf{\pi}{{\HChoice{\psd}{\msd}} {C_{\psd(a) a}}}} 
= {\postvf{\pi}{\VChoice{a}{\alpha}\;\HChoice{\psd}{\msd}  C_{\psd(a) a}} }
$,
so the whole system can be equivalently regarded as channel 
$\VChoice{a}{\alpha}\;\HChoice{\psd}{\msd}  C_{\psd(a) a}$.
Also in this case the minimax problem has  a deterministic solution, but only for the attacker.
\begin{Theorem}
In an attacker-first sequential game with hidden choice, 
there exist  $a\in\cala$ and $\delta \in \distr\cald$ such that, for every $\alpha\in\distr\cala$ and $\msd\in\distr(\calad)$ we have that
$
\Pay(\delta,\alpha)\leq \Pay(\delta,a)\leq \Pay(\msd,a)
$.
\end{Theorem}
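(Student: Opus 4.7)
The plan is to apply von Neumann's minimax theorem to a restricted game in which the defender is confined to \emph{non-reactive} mixed strategies $\delta\in\distr\cald$, viewed as the subset of $\distr(\calad)$ supported on constant reaction functions $\psd_{d}(\cdot)=d$. Under this identification, for every $a$ we have $\HChoice{\psd}{\msd_{\delta}} C_{\psd(a)a}=\HChoice{d}{\delta} C_{da}$, so the payoff becomes
\[
\Pay(\delta,\alpha)=\sum_{a\in\cala}\alpha(a)\,\postvf{\pi}{\HChoice{d}{\delta} C_{da}}=\postvf{\pi}{\VChoice{a}{\alpha}\HChoice{d}{\delta} C_{da}},
\]
where the second equality uses Theorem~\ref{theo:convex-V-q}(\ref{enumerate:vg:v:linear}). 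After commuting the two operators by Proposition~\ref{prop:reorganization-operators}(\ref{item:c}) and invoking Corollary~\ref{cor:concave-convex-V}, this function is convex in $\delta$, linear in $\alpha$, and continuous on the compact convex domain $\distr\cald\times\distr\cala$.

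Next, Theorem~\ref{theo:vonneumann} yields a saddle point $(\delta^{*},\alpha^{*})\in\distr\cald\times\distr\cala$ satisfying $\Pay(\delta^{*},\alpha)\leq\Pay(\delta^{*},\alpha^{*})\leq\Pay(\delta',\alpha^{*})$ for all $\delta'\in\distr\cald$ and $\alpha\in\distr\cala$. Because $\Pay(\delta^{*},\cdot)$ is linear on the simplex $\distr\cala$, its maximum over $\alpha$ is attained at a vertex, so we may replace $\alpha^{*}$ by a pure attacker action $a\in\cala$. Setting $\delta\eqdef\delta^{*}$, the first saddle inequality immediately yields the first desired inequality $\Pay(\delta,\alpha)\leq\Pay(\delta,a)$ for every $\alpha$.

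It remains to upgrade the second saddle inequality $\Pay(\delta,a)\leq\Pay(\delta',a)$ from $\delta'\in\distr\cald$ to arbitrary reactive strategies $\msd\in\distr(\calad)$. The key observation is that, once the attacker commits to the pure action $a$, only the marginal of $\msd$ at $a$ matters: letting $\delta^{\msd}_{a}(d)\eqdef\sum_{\psd:\psd(a)=d}\msd(\psd)$, a direct computation gives
\[
\Pay(\msd,a)=\postvf{\pi}{\HChoice{\psd}{\msd} C_{\psd(a)a}}=\postvf{\pi}{\HChoice{d}{\delta^{\msd}_{a}} C_{da}}=\Pay(\delta^{\msd}_{a},a).
\]
As $\msd$ ranges over $\distr(\calad)$, $\delta^{\msd}_{a}$ ranges over all of $\distr\cald$, so the infimum of $\Pay(\msd,a)$ over reactive strategies coincides with the minimum over $\distr\cald$, which is attained by $\delta$. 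Hence $\Pay(\delta,a)\leq\Pay(\msd,a)$, completing the two desired inequalities.

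The main obstacle I expect is clarifying the transition between the reactive strategy space $\distr(\calad)$ appearing in the statement and the non-reactive space $\distr\cald$ to which the minimax theorem directly applies. Convexity (rather than linearity) of $\Pay$ in $\delta$, forced by hidden choice, prevents restricting the defender to pure actions, so the defender's randomization in $\distr\cald$ is essential. Yet the marginal collapse above shows that reactivity buys the defender nothing once the attacker has committed to a pure action, which is precisely the reason a saddle point of the specific form $(\delta,a)$ exists.
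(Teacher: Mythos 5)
Your reduction to the simultaneous game breaks at the point where you replace $\alpha^{*}$ by a pure action $a$. Linearity of $\Pay(\delta^{*},\cdot)$ guarantees that some vertex $a$ attains $\max_{\alpha}\Pay(\delta^{*},\alpha)$, so the \emph{first} saddle inequality survives the substitution; the \emph{second} one does not. Von Neumann gives you $\Pay(\delta^{*},\alpha^{*})\leq\Pay(\delta',\alpha^{*})$, i.e.\ $\delta^{*}$ minimizes the single convex function $\sum_{a'}\alpha^{*}(a')\,\postvf{\pi}{\HChoice{d}{\delta'}C_{d a'}}$; it does not give $\Pay(\delta^{*},a)\leq\Pay(\delta',a)$, which would require $\delta^{*}$ to minimize the individual summand indexed by $a$. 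A best response $a$ to an optimal $\delta^{*}$ need not itself be a maximin-optimal attacker strategy: the function $\min_{\delta}\Pay(\delta,\alpha)$ is concave in $\alpha$, so its maximum is generally attained in the interior of the simplex, not at a vertex. Your ``marginal collapse'' computation is correct, but it only shows $\inf_{\msd}\Pay(\msd,a)=\min_{\delta'}\Pay(\delta',a)$; the claim that this minimum ``is attained by $\delta$'' is exactly the unproved (and false) step. Concretely, in the running example the Game IV equilibrium is $p^{*}=q^{*}=\nicefrac{4}{7}$, where both pure attacker actions are best responses with $\Pay(\delta^{*},a)=\nicefrac{5}{7}$, while $\min_{\delta'}\Pay(\delta',a)=\nicefrac{1}{2}$ for each $a$; so the second inequality fails for a suitable $\delta'$.

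The gap is not repairable, because the statement as printed is among the mistakes acknowledged in the paper's title footnote: no pair $(a,\delta)$ with a non-reactive $\delta\in\distr\cald$ can satisfy both inequalities in general. Indeed they would force $\Pay(\delta,a)=\max_{\alpha}\Pay(\delta,\alpha)\geq\min_{\delta'}\max_{\alpha}\Pay(\delta',\alpha)$ (the Game IV value) and simultaneously $\Pay(\delta,a)=\min_{\msd}\Pay(\msd,a)\leq\max_{a'}\min_{\delta'}\postvf{\pi}{\HChoice{d}{\delta'}C_{d a'}}$ (the Game VI value), contradicting the strict gap that Theorem~\ref{theo:order}(b) permits and that the running example realizes ($\nicefrac{5}{7}$ versus $\nicefrac{1}{2}$). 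The corrected statement allows the defender a reactive strategy $\msd\in\distr(\calad)$, and then no minimax theorem is needed: for each $a'$ pick $\delta_{a'}\in\argmin_{\delta}\postvf{\pi}{\HChoice{d}{\delta}C_{d a'}}$, let $\msd$ be any strategy whose marginal at each $a'$ is $\delta_{a'}$, and let $a\in\argmax_{a'}\min_{\delta}\postvf{\pi}{\HChoice{d}{\delta}C_{d a'}}$; both inequalities then follow directly from your own marginal computation. (This version of the paper supplies no proof of the theorem, so the comparison here is necessarily against the statement and the surrounding examples rather than an official argument.)
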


\begin{Example}
Consider again the example of Section~\ref{sec:running-example}, this time in the setting of Game VI.
Consider also the calculations made in Example~\ref{exa:exempio}, we will use the same results and notation here. 
In this setting, the attacker is obliged to make its choice first. 
If he chooses $0$, which corresponds to committing to the system $C_{00}\hchoice{p} C_{10}$, 
then the defender will choose $p=\nicefrac{1}{4}$, which minimizes its vulnerability. 
If he chooses $1$, which corresponds to committing to the system $C_{01}\hchoice{p} C_{11}$, 
the defender will choose $p=1$, which minimizes its vulnerability of the above channel. 
In both cases, the leakage is $p=\nicefrac{1}{2}$, hence both these strategies are solutions to the minimax. 
Note that in the first case the strategy of the defender is mixed, while that of the attacker is always pure. 
\end{Example}

\subsection{Comparing the games}
\label{sec:comparing-games}

If we look at the various payoffs obtained for the running example in the various games, we obtain the following values (listed in decreasing order):
${\rm II:} \, 1;\, \allowbreak 
{\rm I:} \, \nicefrac{4}{5};\, \allowbreak 
{\rm III:} \, \nicefrac{2}{3};\, \allowbreak 
{\rm IV:} \, \nicefrac{4}{7};\, \allowbreak 
{\rm V:} \, \nicefrac{4}{7};\, \allowbreak 
{\rm VI:} \, \nicefrac{1}{2} \allowbreak 
$.
\commentM{Convert this into a theorem. 
Then probably Theorem~\ref{theo:order} will become a lemma.
And we have to check the relation of all this with Theorem~\ref{theo:deterministic-strategies}.}
\replyY{Let's do it in the journal version of this POST paper.}

\begin{wrapfigure}{r}{0.3\linewidth}
\centering
\includegraphics[width=0.14\columnwidth]{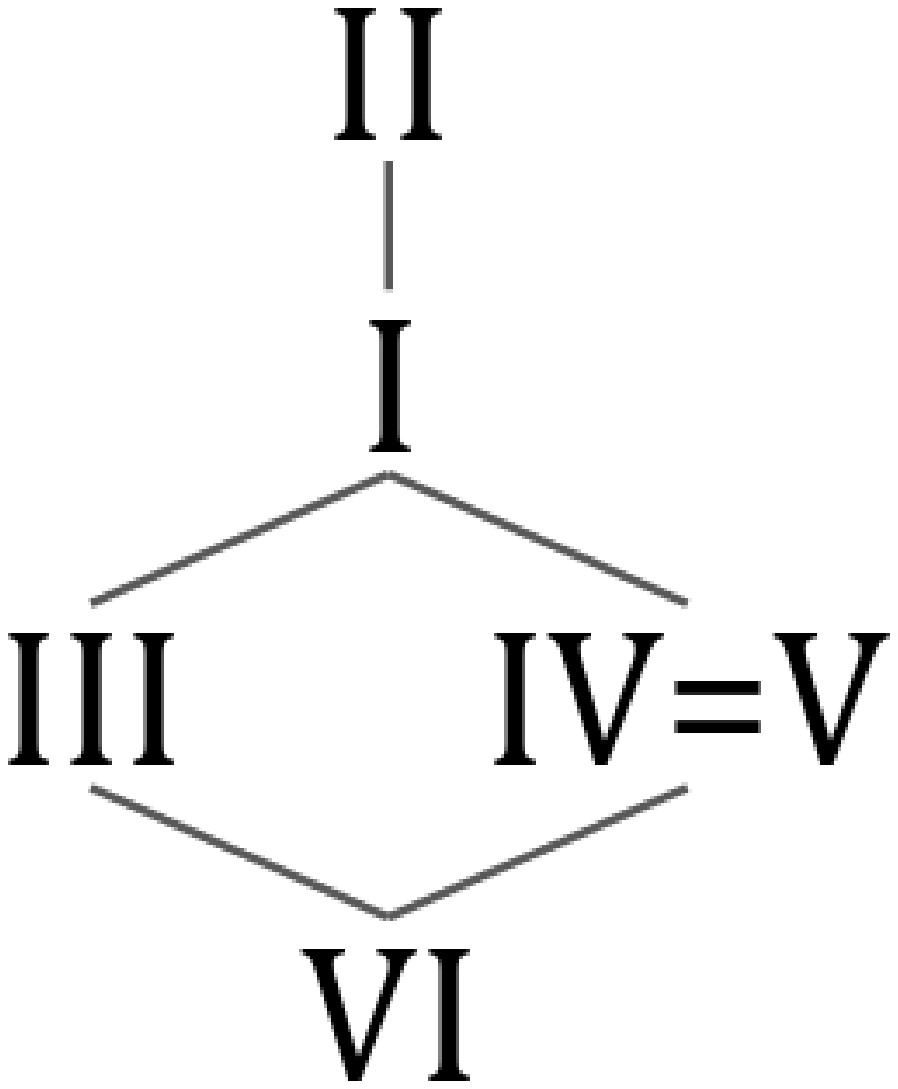}
\vspace{-2mm}
\caption{Order of games w.r.t. payoff. 
Games higher in the lattice have larger payoff.}
\vspace{-6mm}
\label{fig:order-games1}
\end{wrapfigure}
This order is not accidental:
for any vulnerability function, and for any prior, the various games are
ordered, with respect to the payoff, as shown in Figure~\ref{fig:order-games1}.
The relations between II, I, and III, and between IV-V and VI come from the fact that, in any zero-sum sequential game the leader's payoff  will
be less or equal to his payoff in the corresponding simultaneous game.
We think this result is well-known in  game theory, but we give the hint of the proof nevertheless, for the sake of clarity.
\begin{restatable}{Theorem}{order}
\label{theo:order}
It is the case that:
\begin{enumerate}[(a)]
\item\label{item:order-uno}
$
\begin{array}[t]{lll}
\min_\delta\max_\msa \postvf{\pi}{\VChoiceDouble{d}{\delta}{\psa}{\msa}  C_{d \psa(d)}} &\geq &
\min_\delta\max_\alpha \postvf{\pi}{\VChoiceDouble{d}{\delta}{a}{\alpha} C_{da}} \\
&\geq&
\max_\alpha  \min_{\msd}{\postvf{\pi}{\VChoiceDouble{\psd}{\msd}{a}{\alpha}  C_{\psd(a) a}} }
\end{array}$\\
\,\\
\item
$
\min_\delta\max_\alpha\postvf{\pi}{\VChoice{a}{\alpha}\, {\HChoice{d}{\delta}} {C_{d a}}} \geq
\max_\alpha  \min_{\msd} {\postvf{\pi}{\HChoice{a}{\alpha}\;\VChoice{\psd}{\msd}  C_{\psd(a) a}} }
$
\end{enumerate}
\end{restatable}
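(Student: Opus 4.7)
The plan is to derive both parts from two elementary ingredients: the weak minimax inequality $\max_y\min_x f(x,y)\le\min_x\max_y f(x,y)$, which holds for any real-valued $f$ on any product domain, and a \emph{constant-function embedding} that allows one to view each simultaneous-game follower strategy as a special sequential-game follower strategy. The embedding formalises the idea that an informed follower can always ignore the leader's observed move and hence never does strictly worse than an uninformed one; the weak minimax inequality then turns this pointwise monotonicity into the claimed chain.

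For part~(a) I would prove the two inequalities in turn. For the left one, given $\alpha\in\distr\cala$ I define $\msa\in\distr(\calda)$ supported on the constant functions $\psa_a:d\mapsto a$, with $\msa(\psa_a)=\alpha(a)$. Unfolding the payoff and using linearity of posterior vulnerability w.r.t.\ visible choice (Theorem~\ref{theo:convex-V-q}(\ref{enumerate:vg:v:linear})) yields $\postvf{\pi}{\VChoiceDouble{d}{\delta}{\psa}{\msa}C_{d\psa(d)}}=\postvf{\pi}{\VChoiceDouble{d}{\delta}{a}{\alpha}C_{da}}$ for this specific $\msa$; hence $\max_\msa\ge\max_\alpha$ pointwise in $\delta$, and taking $\min_\delta$ preserves the inequality. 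For the right one, the dual embedding on the defender's side gives $\min_\msd\postvf{\pi}{\VChoiceDouble{\psd}{\msd}{a}{\alpha}C_{\psd(a)a}}\le\min_\delta\postvf{\pi}{\VChoiceDouble{d}{\delta}{a}{\alpha}C_{da}}$; taking $\max_\alpha$ and then applying weak minimax to the (bilinear) simultaneous payoff closes the chain.

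Part~(b) follows the same template, but the embedding must now be carried through a hidden choice. The key computation is that if $\msd\in\distr(\calad)$ is concentrated on the constant maps $\psd_d:a\mapsto d$ with weights $\msd(\psd_d)=\delta(d)$, then $\HChoice{\psd}{\msd}C_{\psd(a)a}=\sum_d\delta(d)\,C_{da}=\HChoice{d}{\delta}C_{da}$ as channels, for every fixed $a$. Wrapping both sides with the outer operator on $a$, and invoking Proposition~\ref{prop:reorganization-operators}(\ref{item:c}) to align the natural Game~VI expression $\VChoice{a}{\alpha}\HChoice{\psd}{\msd}C_{\psd(a)a}$ with the form appearing in the statement, identifies the Game~VI payoff at this embedded $\msd$ with the Game~IV payoff at $\delta$. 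The resulting $\min_\msd\le\min_\delta$ pointwise in $\alpha$, composed with weak minimax, yields the bound.

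The only step requiring real care is this last embedding: because posterior vulnerability is only convex, not linear, in $\delta$ when $\delta$ is fed through a hidden choice (Corollary~\ref{cor:concave-convex-V}), the identification cannot be derived from bilinearity as in part~(a) and must be verified directly from the definition of $\HChoice{\cdot}{\cdot}$. Once that pointwise channel identity is in hand, the rest of the argument is the same classical ``leader's value $\le$ simultaneous value'' bookkeeping that underlies part~(a).
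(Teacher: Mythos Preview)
Your overall approach is correct and follows the standard game-theoretic route: embed the simultaneous follower's strategy space into the sequential follower's via constant functions, then invoke $\max\min\le\min\max$. The paper's sketch takes a slightly different path for the first inequality in~(a): instead of embedding an arbitrary $\alpha$, it exhibits the single pure attacker strategy concentrated on the best-response map $\psa(d)=\argmax_a\postvf{\pi}{C_{da}}$ and argues that this particular $\msa$ already dominates $\max_\alpha$ pointwise in $\delta$. Both constructions are valid; yours is the strategy-space-inclusion argument, the paper's is the exhibit-an-optimal-witness argument. For the remaining inequalities the paper only writes ``all other cases can be proved with an analogous reasoning'', so your explicit treatment is in fact more detailed than what appears there.

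There is, however, a real gap in your handling of~(b). You claim that Proposition~\ref{prop:reorganization-operators}(\ref{item:c}) aligns the form printed in the statement, $\HChoice{a}{\alpha}\,\VChoice{\psd}{\msd}\,C_{\psd(a)a}$, with the Game~VI payoff $\VChoice{a}{\alpha}\,\HChoice{\psd}{\msd}\,C_{\psd(a)a}$. It does not: that proposition only swaps the \emph{nesting order} of the two operators while keeping each index attached to the same operator, i.e.\ it gives $\HChoice{i}{\mu}\VChoice{j}{\eta}\equiv\VChoice{j}{\eta}\HChoice{i}{\mu}$, not $\HChoice{i}{\mu}\VChoice{j}{\eta}\equiv\VChoice{i}{\mu}\HChoice{j}{\eta}$. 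Applied here it yields $\HChoice{a}{\alpha}\,\VChoice{\psd}{\msd}\,C_{\psd(a)a}\equiv\VChoice{\psd}{\msd}\,\HChoice{a}{\alpha}\,C_{\psd(a)a}$, which is still not Game~VI, and in general the two are not equivalent (which index carries the hidden choice genuinely matters for vulnerability). The printed right-hand side of~(b) is in fact a typo---the paper itself flags that this version contains minor mistakes corrected in the full version---and the intended expression is the Game~VI payoff $\postvf{\pi}{\VChoice{a}{\alpha}\,\HChoice{\psd}{\msd}\,C_{\psd(a)a}}$. With that correction your channel identity $\HChoice{\psd}{\msd}C_{\psd(a)a}=\HChoice{d}{\delta}C_{da}$ for the constant-map $\msd$ plugs directly into the outer $\VChoice{a}{\alpha}$, no reorganisation is needed, and your argument for~(b) then goes through cleanly.
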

\begin{proof}
We prove the first inequality in ({\it\ref{item:order-uno}}). 
Independently of $\delta$, consider the attacker strategy $\tau_a$ that assigns probability $1$ to the function $\psa$ defined as
$
\psa(d) = \argmax_a \postvf{\pi}{C_{d a}} 
$.
Then we have that
\begin{align*}
\min_\delta\max_\msa \postvf{\pi}{\VChoiceDouble{d}{\delta}{\psa}{\msa}  C_{d \psa(d)}} 
\geq&\,\, \min_\delta \postvf{\pi}{\VChoiceDouble{d}{\delta}{\psa}{\tau_a}  C_{d \psa(d)}}  \\
\geq&\,\, \min_\delta\max_\alpha \postvf{\pi}{\VChoiceDouble{d}{\delta}{a}{\alpha} C_{da}} 
\end{align*}
Note that the strategy $\tau_a$ is optimal  for the adversary, so the first of the above inequalities is actually an equality. 
All other cases can be proved with an analogous reasoning. 
\qed
\end{proof}

Concerning III and IV-V: these are not related. In the running example the payoff for III is higher than for IV-V, but it is easy to find other cases in which the situation is reversed. For instance, if in the running example we set $C_{11}$ to be the same as $C_{00}$, the payoff for  III  will be $\nicefrac{1}{2}$, and that for IV-V will be $\nicefrac{2}{3}$.

Finally, the relation between III  and VI comes from the fact that they are both attacker-first sequential games, and 
the only difference is the way in which the payoff is defined. 
Then, just observe that in general we have, for every $a\in \cala$ and every $\delta\in \distr\cald$:
$
\postvf{\pi}{{\HChoice{d}{\delta}} {C_{d a}}}\leq\postvf{\pi}{{\VChoice{d}{\delta}} {C_{d a}}}
$.
 
The relations in Figure~\ref{fig:order-games1} can be used by the defender as guidelines to better protect the system, if he has some control over the rules of the game. Obviously, for the defender the games lower in the ordering are to be preferred.

\section{Case study: a safer, faster password-checker}
\label{sec:password-example}
\begin{wrapfigure}[14]{r}{0.44\linewidth}
\vspace{-8mm}
\begin{footnotesize}
\noindent \texttt{\underline{Program PWD\textsubscript{123}}}\\[1.3mm]
\texttt{\textbf{High Input:}} $x \in \{000, 001, \ldots, 111\}$\\[-0.3mm]
\texttt{\textbf{Low Input:}} $a \in \{000, 001, \ldots, 111\}$\\[-0.3mm]
\texttt{\textbf{Output:}} $y\in \{\true,\false\}$\\[0.3mm] 
accept $:= \true$ \\[-0.3mm]
\textbf{for} $i = 1, 2, 3$ \textbf{do}\\[-0.3mm]
\hspace*{4mm} \textbf{if} $a_{i} \neq x_{i}$ \textbf{then}\\[-0.3mm]
\hspace*{4mm} \hspace*{4mm} accept $:= \false$\\[-0.3mm]
\hspace*{4mm} \hspace*{4mm} \textbf{break}\\[-0.3mm]
\hspace*{4mm} \textbf{end if}\\[-0.3mm]
\textbf{end for}\\[-0.3mm]
\textbf{return} accept
\end{footnotesize}
\vspace{-2mm}
\caption{Password-checker algorithm.}
\label{fig:pwd-checker}
\end{wrapfigure}
In this section we apply our game-theoretic, compositional
approach to show how a defender can mitigate an attacker's typical timing 
side-channel attack while avoiding the usual burden imposed on the 
password-checker's efficiency.

Consider the password-checker \texttt{PWD}\textsubscript{123} of 
Figure~\ref{fig:pwd-checker}, which performs a bitwise-check of
a 3-bit low-input $a{=}a_{1}a_{2}a_{3}$, provided by the attacker,
against a 3-bit secret password $x{=}x_{1}x_{2}x_{3}$.
The low-input is rejected as soon as it mismatches the secret, 
and is accepted otherwise.

The attacker can choose low-inputs to try to gain information about 
the password. 
Obviously, in case \texttt{PWD}\textsubscript{123} accepts the low-input,
the attacker learns the password value is $a{=}x$.
Yet, even when the low-input is rejected,
there is some leakage of information:
from the duration of the execution the attacker can estimate how 
many iterations have been performed before the low-input was rejected, 
thus inferring a prefix of the secret password. 

To model this scenario, let $\calx = \{000,001,\ldots,111\}$ be the set of all possible 3-bit 
passwords, and $\caly = \{(\false,1), (\false,2), \allowbreak (\false,3), (\true,3)\}$ be 
the set of observables produced by the system.
Each observable is an ordered pair whose first element indicates whether the password 
was accepted ($\true$ or $\false$), and the second element indicates the duration of 
the computation ($1$, $2$, or $3$ iterations).
For instance, channel $C_{123,101}$ in Figure~\ref{fig:pwd-channels}
models \texttt{PWD}\textsubscript{123}'s behavior when the 
attacker provides low-input $a{=}101$.

We will adopt as a measure of information \emph{Bayes vulnerability}~\cite{Smith:09:FOSSACS}.
The \emph{prior Bayes vulnerability} of a distribution $\pi{\in}\dist{\calx}$ is 
defined as
$
\priorvg{\pi}{=}\max_{x \in \calx} \pi_x
$,
and represents the probability that the attacker guesses correctly
the password in one try.
For instance, if the distribution on all possible $3$-bit passwords is 
$\hat{\pi}=(0.0137, 0.0548, 0.2191, \allowbreak 0.4382, \allowbreak 0.0002, \allowbreak 0.0002, \allowbreak 0.0548, \allowbreak 0.2191)$, 
its prior Bayes vulnerability is $\priorvf{\hat{\pi}}{=}0.4382$.

The \emph{posterior Bayes vulnerability} of a prior $\pi$ and a channel 
$C{:}\calx{\times}\caly{\rightarrow}\reals$ is defined as
$
\postvf{\pi}{C}{=}\sum_{y \in \caly} \max_{x \in \calx} \pi_x C(x,y)
$,
and it represents the probability that the attacker guesses correctly
the password in one try, after he observes the output of the channel (i.e.,
after he has measured the time needed for the checker to accept or reject 
the low-input).
For prior $\hat{\pi}$ above, the posterior 
Bayes 
vulnerability of channel $C_{123,101}$ is $\postvf{\hat{\pi}}{C_{123,101}}{=}0.6577$
(which represents an 
increase in Bayes vulnerability 
of about $50\%$), and the expected running time for this checker is of $1.2747$ iterations.

A way to mitigate this timing side-channel is to make the checker's
execution time independent of the secret. 
Channel $C_{\text{cons},101}$ from Figure~\ref{fig:pwd-channels} models
a checker that does that (by eliminating the \texttt{break} command within 
the loop in \texttt{PWD}\textsubscript{123}) when the attacker's low-input 
is $a{=}101$.
This channel's posterior 
Bayes vulnerability is $\postvf{\hat{\pi}}{C_{123,101}}{=}0.4384$, 
which brings the multiplicative
Bayes leakage down to an increase of only about $0.05\%$.
However, the expected running time goes up to $3$ iterations 
(an increase of about $135\%$ w.r.t. that of $C_{123,101}$).

\begin{wrapfigure}{r}{0.61\linewidth}
\vspace{-8mm}
\centering
\begin{scriptsize}
$
\begin{array}{|c|c|c|c|c|}\hline
\multirow{2}{*}{$C_{123,101}$} & y{=} & y{=} & y{=} & y{=} \\
 & (\false,1) & (\false,2) & (\false,3) & (\true,3) \\ \hline
x{=}000    & 1 & 0 & 0 & 0 \\
x{=}001    & 1 & 0 & 0 & 0 \\
x{=}010    & 1 & 0 & 0 & 0 \\
x{=}011    & 1 & 0 & 0 & 0 \\
x{=}100    & 0 & 0 & 1 & 0 \\
x{=}101    & 0 & 0 & 0 & 1 \\
x{=}110    & 0 & 1 & 0 & 0 \\
x{=}111    & 0 & 1 & 0 & 0 \\ \hline
\end{array}
$
\,\,
$
\begin{array}{|c|c|c|}
\hline
\multirow{2}{*}{$C_{\text{cons},101}$} & y{=} & y{=} \\ 
& (\false,3) & (\true,3) \\ \hline
x{=}000    & 1 & 0  \\
x{=}001    & 1 & 0  \\
x{=}010    & 1 & 0  \\
x{=}011    & 1 & 0  \\
x{=}100    & 1 & 0  \\
x{=}101    & 0 & 1  \\
x{=}110    & 1 & 0  \\
x{=}111    & 1 & 0  \\ \hline
\end{array}
$
\end{scriptsize}
\caption{Channels $C_{da}$ modeling the password
checker for defender's action $d$ and attacker's action $a$.}
\label{fig:pwd-channels}
\vspace{-8mm}
\end{wrapfigure}
Seeking some compromise between security and efficiency, assume that 
the defender can employ password-checkers that perform the bitwise 
comparison among low-input $a$ and secret password $x$ in 
different orders.
More precisely, there is one version of the checker for every 
possible order in which the index $i$ 
ranges in the control of the loop.
For instance, while \texttt{PWD}\textsubscript{123} checks the bits in the order 
$1,2,3$, the alternative algorithm \texttt{PWD}\textsubscript{231} uses the order 
$2,3,1$.

To determine a defender's best choice of which versions of the checker to run,
we model this problem as game.
The attacker's actions $\cala = \{000, 001, \ldots, 111\}$ are all possible low-inputs
to the checker, 
and the defender's $\cald = \{123,  \allowbreak 132,  \allowbreak 213, \allowbreak  231,  \allowbreak 312, \allowbreak 321\}$ are all orders
to perform the comparison.
Hence, there is a total of $48$ possible channels 
$C_{ad}{:}\calx{\times}\caly{\rightarrow}\reals$, one for
each combination of $d{\in}\cald$, $a{\in}\cala$.

In our framework, the utility of a mixed strategy profile
$(\delta, \alpha)$ is given by
$\Pay(\delta,\alpha) \allowbreak = \allowbreak \expectDouble{a\leftarrow\alpha}{}\hspace{-1ex} \; \postvf{\pi}{{\HChoice{d}{\delta}} {C_{d a}}}$.
For each pure strategy profile $(d,a)$, the payoff of the game will be the posterior 
Bayes vulnerability
of the resulting channel $C_{da}$
(since, if we measuring leakage, the prior vulnerability is the same
for every channel once the prior is fixed).
Table~\ref{tab:utility} depicts such payoffs.
Note that the attacker's and defender's actions substantially affect
the effectiveness of the attack:
vulnerability ranges between 0.4934 and 0.9311 
(and so multiplicative leakage is in the range
between an increase of $12\%$ and one of $112\%$).
Using techniques from~\cite{Alvim:17:GameSec}, we can compute
the best (mixed) strategy for the defender in this game, which turns out 
to be
$
\delta^{*}  = (0.1667, 0.1667, 0.1667, 0.1667, 0.1667, 0.1667)
$.
This strategy is part of an equilibrium and guarantees that for any choice
of the attacker the posterior Bayes vulnerability is at most $0.6573$
(so the multiplicative leakage is bounded by $50\%$, an intermediate value
between the minimum of about $12\%$ and the maximum of about $112\%$).
It is interesting to note that the
expected running time, for any action
of the attacker, is bounded by at most
\begin{wraptable}{r}{0.71\linewidth}
\vspace{-8mm}
\centering
\begin{scriptsize}
$
\begin{array}{c|c||c|c|c|c|c|c|c|c|}
\multicolumn{2}{c}{} & \multicolumn{8}{c}{\textbf{Attacker's action $a$}} \\ \cline{2-10}
  & U(d,a) & 000 & 001 & 010 & 011 & 100 & 101 & 110 & 111 \\ \cline{2-10} \cline{2-10}
\multirow{6}{*}{\rotatebox[origin=c]{90}{\stackanchor{\textbf{Defender's}}{\textbf{action $d$}}}}
  & 123 &  0.7257 & 0.7257 & 0.9311 & 0.9311 & 0.6577 & 0.6577 & 0.7122 & 0.7122 \\ \cline{2-10}
  & 132 &  0.8900 & 0.9311 & 0.8900 & 0.9311 & 0.7122 & 0.7122 & 0.7122 & 0.7122 \\ \cline{2-10}
  & 213 &  0.5068 & 0.5068 & 0.9311 & 0.9311 & 0.4934 & 0.4934 & 0.7668 & 0.7668 \\ \cline{2-10}
  & 231 &  0.5068 & 0.5068 & 0.7668 & 0.9311 & 0.5068 & 0.5068 & 0.7668 & 0.9311 \\ \cline{2-10}
  & 312 &  0.7257 & 0.9311 & 0.7257 & 0.9311 & 0.7122 & 0.8766 & 0.7122 & 0.8766 \\ \cline{2-10}
  & 321 &  0.6712 & 0.7122 & 0.7257 & 0.9311 & 0.6712 & 0.7122 & 0.7257 & 0.9311 \\ \cline{2-10}
\end{array}
$
\end{scriptsize}
\vspace{-3mm}
\caption{Utility for each pure strategy profile.}
\label{tab:utility}
\vspace{-7mm}
\end{wraptable}
$2.3922$ iterations
(an increase of only $87\%$ w.r.t. the channel \texttt{PWD}$_{123}$),
which is below 
the worst possible expected $3$ iterations of 
the constant-time password checker.

\section{Related work}
\label{sec:related-work}
Many studies have applied game theory to analyses of security 
and privacy in networks~\cite{Basar:83:TIT,Grossklags:08:WWW,Alpcan:11:TMC},
cryptography~\cite{Katz:08:TCC},
anonymity~\cite{Acquisti:03:FC}, 
location privacy~\cite{Freudiger:09:CCS}, and
intrusion detection~\cite{Zhu:09:GAMENETS},
to cite a few.
See \cite{Manshaei:13:ACMCS} for a survey. 

In the context of quantitative information flow, 
most works consider only passive attackers.
Boreale and Pampaloni~\cite{Boreale:15:LMCS} consider adaptive
attackers, but not adaptive defenders, and show that 
in this case the adversary's optimal strategy can 
be always deterministic.
Mardziel et al.~\cite{Mardziel:14:SP} propose a model for both adaptive 
attackers and defenders, but in none of their extensive case-studies 
the attacker needs a probabilistic strategy to maximize leakage.
In this paper we characterize when randomization is necessary, 
for either attacker or defender, to achieve optimality in our general 
information leakage games.

Security games have been employed to model and analyze payoffs between 
interacting agents, especially between a defender and an attacker.
Korzhyk et al. \cite{Korzhyk:11:JAIR} theoretically analyze security 
games and study the relationships between Stackelberg and Nash Equilibria 
under various forms of imperfect information.
Khouzani and Malacaria \cite{Khouzani:16:CSF} study leakage properties when perfect
secrecy is not achievable due to constraints on the allowable size of the conflating
sets, and provide universally optimal strategies for a wide class of entropy measures,
and for $g$-entropies.
These works, contrarily to ours, do not consider games with hidden 
choice, in which optimal strategies differ from traditional game-theory.

Several security games have modeled leakage when the sensitive
information are the defender's choices themselves, rather than a system's 
high input.
For instance, Alon et al.~\cite{Alon:13:SIAMDM} propose 
zero-sum games in which a defender chooses probabilities of secrets and an attacker chooses and learns some of the defender's secrets.
Then they present how the leakage on the defender's secrets gives influences on the defender's optimal strategy.
More recently, Xu et al.~\cite{Xu:15:IJCAI} show zero-sum games in which the attacker 
obtains partial knowledge on the security resources that the defender protects, and provide 
the defender's optimal strategy under the attacker's such knowledge.

Regarding channel operators, sequential and parallel composition of channels
have been studied (e.g.,~\cite{Kawamoto:17:LMCS}), but we are unaware of any
explicit definition and investigation of hidden and visible choice operators.
Although Kawamoto et al.~\cite{KawamotoBL16} implicitly use the hidden choice to model 
a probabilistic system as the weighted sum of systems, they do not derive the set 
of algebraic properties we do for this operator, and for its interaction with the 
visible choice operator.

\section{Conclusion and future work}
\label{sec:conclusion}
In this paper we used protocol composition to model the introduction of noise 
performed by the defender to prevent leakage of sensitive information.
More precisely, we formalized visible and hidden probabilistic choices of different protocols.
We then formalized the interplay between defender and adversary in a game-theoretic framework adapted to the specific issues of QIF, where the payoff is information leakage. 
We considered various kinds of leakage games, depending on whether players act simultaneously or sequentially, and whether the choices of the defender are visible or not to the adversary.
We established a hierarchy of these games, and provided methods for finding the optimal strategies (at the points of equilibrium) in the various cases.

As future research, we would like to extend leakage games to the 
case of repeated observations, i.e., when the attacker can 
observe  the outcomes of the system in successive runs,  under the 
assumption that both attacker and defender may change the 
channel in each run. 
We would also like to extend our framework to non zero-sum games, in which
the costs of attack and defense are not equivalent, and to analyze differentially-private mechanisms.

\paragraph*{Acknowledgments}
\begin{small}
The authors are thankful to anonymous reviewers for helpful comments.
This work was supported by JSPS and Inria under the project LOGIS of the Japan-France AYAME Program,
and by the project Epistemic Interactive Concurrency (EPIC) from the STIC 
AmSud Program.
M\'{a}rio S. Alvim was supported by CNPq, CAPES, and FAPEMIG.
Yusuke Kawamoto was supported by JSPS KAKENHI Grant Number JP17K12667.
\end{small}

\bibliographystyle{plain}
\bibliography{short,new}

\appendix

\newpage
\section{Proofs of Technical Results}
\label{sec:proofs}

In this section we provide proofs for our technical results.

\subsection{Preliminaries for proofs}

We start by providing some necessary background for
the subsequent technical proofs.

Definition~\ref{def:equivalence-channels} states that
two compatible channels (i.e., with the same input space)
are equivalent if yield the same value of vulnerability 
for every prior and every vulnerability function.
The result below, from the literature, provides 
necessary and sufficient conditions for two channels being 
equivalent.
The result employs the extension of channels with an all-zero 
column as follows.
For any channel $C$ of type $\calx \times \caly \rightarrow \reals$, 
its \emph{zero-column extension} $C^{0}$ is the channel of type 
$\calx \times \left(\caly \cup \{y_{0}\}\right) \rightarrow \reals$,
with $y_{0} \notin \caly$, s.t. $C^{0}(x,y) = C(x,y)$ for all 
$x\in\calx$, $y\in\caly$, and $C^{0}(x,y_{0}) = 0$ for all $x \in \calx$.

\begin{restatable}[Characterization of channel equivalence~\cite{Alvim:12:CSF,McIver:14:POST}]{Lemma}{resequivalentchannelsconditions}
\label{lemma:equivalent-channels-conditions}
Two channels $C_{1}$ and $C_{2}$ are equivalent iff
every column of $C_{1}$ is a convex combination of 
columns of $C_{2}^{0}$, and every column of $C_{2}$ is a 
convex combination of columns of $C_{1}^{0}$.
\end{restatable}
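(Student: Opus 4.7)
The plan is to prove the two directions of the iff separately, both hinging on a standard lift of $\vf$ from distributions to sub-distributions. Writing $\hat{\vf}(\mu) \eqdef \left(\sum_x \mu(x)\right) \vf\!\left(\mu/\sum_x \mu(x)\right)$ for every nonzero sub-distribution $\mu$ on $\calx$, and $\hat{\vf}(0) \eqdef 0$, convexity of $\vf$ makes $\hat{\vf}$ convex and positively $1$-homogeneous, hence subadditive on the nonnegative cone. The payoff is that posterior vulnerability factors column-wise: for any channel $C$ with columns $\{c_y\}_y$ and prior $\pi$, one has $\postvf{\pi}{C} = \sum_{y} \hat{\vf}(\pi \odot c_y)$, where $\pi \odot c_y$ denotes the sub-distribution with entries $\pi(x) C(x,y)$. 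Note that the adjoined all-zero column of $C^{0}$ contributes $\hat{\vf}(0) = 0$ to this sum, so $\postvf{\pi}{C^{0}} = \postvf{\pi}{C}$ and the zero-column extension is a harmless bookkeeping device.

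For the $(\Leftarrow)$ direction, assume each column $c_y$ of $C_1$ decomposes as a convex combination $c_y = \sum_{z} \lambda_{y,z}\, c'_z$ of columns $c'_z$ of $C_2^{0}$. Scaling componentwise by $\pi$ preserves the decomposition, so $\pi \odot c_y = \sum_z \lambda_{y,z}\,(\pi \odot c'_z)$ and convexity of $\hat{\vf}$ gives $\hat{\vf}(\pi \odot c_y) \leq \sum_z \lambda_{y,z}\,\hat{\vf}(\pi \odot c'_z)$. Summing over $y$ and exchanging the order of summation yields $\postvf{\pi}{C_1} \leq \sum_{z} \bigl(\sum_y \lambda_{y,z}\bigr)\,\hat{\vf}(\pi \odot c'_z)$; a short bookkeeping argument, exploiting row-stochasticity of both channels together with the zero-column slack provided by $C_2^{0}$, lets me choose the decomposition so that $\sum_y \lambda_{y,z} \leq 1$ for each $z$, which collapses the bound to $\postvf{\pi}{C_1} \leq \postvf{\pi}{C_2}$. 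The symmetric hypothesis yields the reverse inequality, and since $\pi$ and $\vf$ were arbitrary we conclude $C_1 \equiv C_2$.

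For the $(\Rightarrow)$ direction I would argue by contrapositive: suppose some column $c^{*}$ of $C_1$ is \emph{not} a convex combination of columns of $C_2^{0}$. The convex hull of the finitely many columns of $C_2^{0}$ is a closed convex subset of $\reals^{|\calx|}$, so the separating hyperplane theorem yields $g \in \reals^{|\calx|}$ with $\langle g, c^{*}\rangle > \max_z \langle g, c'_z\rangle$. I would then promote $g$ to a $g$-vulnerability $\vf_g$---which is convex and continuous, hence admissible under the paper's axiomatic framework---and pick a prior (for instance the uniform one) so that the strict per-column separation translates into a strict gap between $\postvf{\pi}{C_1}$ and $\postvf{\pi}{C_2}$, contradicting equivalence. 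The hard part is precisely this packaging step: the single-column separation must survive summation across all the other output columns. I expect to handle it by taking $\vf_g$ to be a pointwise maximum over a carefully chosen finite family of affine functionals, one calibrated per ``offending'' column, so that each channel's output-by-output evaluation is pinned down and the discrepancy from $c^{*}$ cannot be masked by cancellations elsewhere.
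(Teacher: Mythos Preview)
The paper does not prove this lemma at all; it is quoted as a known result from the cited literature and then used as a black box in the appendix proofs. So there is no argument in the paper to compare yours against.

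More importantly, the lemma \emph{as stated here} is false in both directions, so your attempt cannot be completed. For $(\Rightarrow)$, take
\[
C_1=\begin{pmatrix}1&0\\0&1\end{pmatrix},\qquad
C_2=\begin{pmatrix}\nicefrac12&\nicefrac12&0\\0&0&1\end{pmatrix}.
\]
These are equivalent ($C_2$ merely splits one output of $C_1$, so they induce the same hyper for every prior), yet the column $(1,0)^T$ of $C_1$ is not a convex combination of the columns of $C_2^{0}$, all of which have first coordinate at most $\nicefrac12$. For $(\Leftarrow)$, take
\[
C_1=\begin{pmatrix}\nicefrac23&0&\nicefrac13\\0&\nicefrac23&\nicefrac13\end{pmatrix},\qquad
C_2=\begin{pmatrix}\nicefrac23&0&\nicefrac13&0\\0&\nicefrac23&0&\nicefrac13\end{pmatrix}.
\]
Every column of each lies in the convex hull of the other's zero-extended columns (both hulls equal the triangle with vertices $(\nicefrac23,0)$, $(0,\nicefrac23)$, $(0,0)$), yet for the uniform prior and Bayes vulnerability one gets $\postvf{\pi_u}{C_1}=\nicefrac56$ while $\postvf{\pi_u}{C_2}=1$, so $C_1\not\equiv C_2$.

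This pinpoints exactly where your sketch breaks. In $(\Leftarrow)$, the ``short bookkeeping argument'' you defer is the entire difficulty and is in fact impossible: the per-column convexity constraint $\sum_z\lambda_{y,z}=1$ gives you no control over $\sum_y\lambda_{y,z}$, and even if it did you would still need $\hat\vf\ge0$, which the paper's axioms (convex, continuous, real-valued) do not guarantee. In $(\Rightarrow)$, your separating-hyperplane construction would happily separate $(1,0)^T$ from the hull of $C_2^{0}$'s columns in the first counterexample, yet no admissible $\vf$ can distinguish the two channels; the ``packaging step'' you flag as hard is not merely hard but infeasible. The correct characterisation in the cited sources is the Blackwell one ($C_1\equiv C_2$ iff $C_1=C_2R$ and $C_2=C_1R'$ for row-stochastic $R,R'$); note that the authors themselves flag that this conference version contains minor mistakes corrected in the full version, and the wording of this lemma appears to be one of them.
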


Note that the result above implies that for being equivalent,
any two channels must be compatible.

\subsection{Proofs of Section~\ref{sec:operators}}

\restypehiddenchoice*

\begin{proof}
Since hidden choice is defined as a summation
of matrices, the type of $\HChoice{i}{\mu} C_{i}$ is 
the same as the type of every $C_{i}$ in the family.

To see that $\HChoice{i}{\mu} C_{i}$ is a channel
(i.e., all of its entries are non-negative, and all
of its rows sum up to 1),
first note that, since each $C_{i}$ in the family is a 
channel matrix, $C_{i}(x,y)$ lies in the interval 
$[0,1]$ for all $x \in \calx$, $y\in \caly$.
Since $\mu$ is a set of convex coefficients, from 
the definition of hidden choice
it follows that
also $\sum_{i \in \cali} \mu(i)C_{i}(x,y)$ must lie 
in the interval $[0,1]$ for every $x,y$.

Second, note that for all $x \in \calx$:
\begin{align*}
\sum_{y\in \caly} \left(\HChoice{i}{\mu} C_{i}\right)(x,y)
=&\, \sum_{y\in \caly} \sum_{i \in \cali} \mu(i)C_{i}(x,y) & \text{(def. of hidden choice)} \\
=&\, \sum_{i \in \cali} \mu(i) \sum_{y\in \caly} C_{i}(x,y) \\
=&\, \sum_{i \in \cali} \mu(i) \cdot 1 & \text{($C_{i}{:}\calx{\times}\caly{\rightarrow}\reals$ are channels)} \\
=&\, 1 & \text{($\mu$ is a prob. dist.)}
\end{align*}
\end{proof}

\restypevisiblechoice*

\begin{proof}
Visible choice applied to a family $\{C_{i}\}$ of channels 
scales each matrix $C_{i}$ by a factor $\mu(i)$, which
preserves the type $\calx \times \caly_{i} \rightarrow \reals$ 
of each matrix, and then concatenates all the matrices so produced,
yielding a result of type
$\calx \times \left( \bigsqcup_{i \in \cali} \caly_{i} \right) \rightarrow \reals$.

To see that $\VChoice{i}{\mu} C_{i}$ is a channel
(i.e., that all of its entries are non-negative, and that
all rows sum-up to $1$), note that each element of the 
visible choice on the family $\{C_{i}\}$ can be denoted by 
$\left(\VChoice{i}{\mu} C_{i}\right)(x,(y,j))$,
where $x \in \calx$, $j \in \cali$, and $y \in \caly_{j}$.
Then, note that for all $x \in \calx$, $j \in \cali$, and $y \in \caly_{j}$:
\begin{align}
\label{eq:propvchoicetype0}
\left(\VChoice{i}{\mu} C_{i}\right)(x,(y,j)) \nonumber 
=&\, \left(\bigconc_{i \in \cali} \mu(i)C_{i}\right)(x,(y,j)) & \text{(def. of visible choice)} \nonumber \\
=&\, \left(\mu(j)C_{j}\right)(x,y) & \text{(def. of concatenation)} \nonumber \\
=&\, \mu(j)C_{j}(x,y) & \text{(def. of scalar mult.)}
\end{align}
which is a non-negative value, since, both $\mu(j)$ and 
$C_{j}(x,y)$ are non-negative.

Finally, note that for all $x \in \calx$:
\begin{align*}
\sum_{\substack{j \in \cali \\ y \in \caly_{j}}} \left(\VChoice{i}{\mu} C_{i}\right)(x,(y,j))
=&\, \sum_{\substack{j \in \cali \\ y \in \caly_{j}}} \mu(j)C_{j}(x,y) &
\text{(by Eq.~\eqref{eq:propvchoicetype0})} \\ 
=&\, \sum_{j \in \cali} \mu(j) \sum_{y \in \caly_{j}} C_{j}(x,y) &
\text{} \\ 
=&\, \sum_{j \in \cali} \mu(j) \cdot 1 & \text{($C_{j}{:}\calx{\times}\caly_{j}{\rightarrow}\reals$ are channels)} \\
=&\, 1 & \text{($\mu$ is a prob. dist.)}
\end{align*}
\end{proof}

\residempotency*

\begin{proof}

\begin{enumerate}[a)]
\item Idempotency of hidden choice:
\begin{align*}
 \HChoice{i}{\mu}C_{i} 
=&\, \sum_{i} \mu(i)C_{i} & \text{(def. of hidden choice)} \\
=&\, \sum_{i} \mu(i)C & \text{(since every $C_{i}=C$)} \\
=&\, C \sum_{i} \mu(i) & \text{} \\
=&\, C & \text{(since $\mu$ is a prob. dist.)}
\end{align*}

\item Idempotency of visible choice:
\begin{align*}
 \VChoice{i}{\mu}C_{i} 
=&\, \bigconc_{i} \mu(i)C_{i} & \text{(def. of visible choice)} \\
=&\, \bigconc_{i} \mu(i)C & \text{(since every $C_{i}=C$)} \\
\equiv&\, C & \text{(by Lemma~\ref{lemma:equivalent-channels-conditions})}
\end{align*}

In the above derivation, we can apply 
Lemma~\ref{lemma:equivalent-channels-conditions}
because every column of the channel on each side of the equivalence 
can be written as a convex combination of the zero-column extension
of the channel on the other side.
\end{enumerate}
\end{proof}

\resreorganizationoperators*

\begin{proof}

\begin{enumerate}[a)]
\item 
\begin{align*}
\HChoice{i}{\mu} \HChoice{j}{\eta} C_{i j} 
=&\, \HChoice{i}{\mu} \left( \sum_{j} \eta(j) C_{i j} \right) & \text{(def. of hidden choice)} \\ 
=&\, \sum_{i} \mu(i) \left( \sum_{j} \eta(j) C_{i j} \right) & \text{(def. of hidden choice)} \\ 
=&\, \sum_{i,j} \eta(i)\eta(j) C_{i j} & \text{(reorganizing the sums)} \\ 
=&\, \HChoiceDouble{i}{\mu}{j}{\eta} C_{i j} & \text{(def. of hidden choice)}
\end{align*}

\item 
\begin{align*}
\VChoice{i}{\mu} \VChoice{j}{\eta} C_{i j} 
=&\, \VChoice{i}{\mu} \left( \bigconc_{j} C_{i j} \right) & \text{(def. of visible choice)} \\ 
=&\, \bigconc_{i} \mu(i) \left( \bigconc_{j} \eta(j) C_{i j} \right) & \text{(def. of visible choice)} \\
\equiv& \bigconc_{i j} \mu(i) \eta(j) C_{i j} & \text{(by Lemma~\ref{lemma:equivalent-channels-conditions})} \\ 
=&\,\VChoiceDouble{i}{\mu}{j}{\eta} C_{i j} & \text{(def. of visible choice)}
\end{align*}
In the above derivation, we can apply 
Lemma~\ref{lemma:equivalent-channels-conditions}
because every column of the channel on each side of the equivalence 
can be written as a convex combination of the zero-column extension
of the channel on the other side.

\item 
\begin{align*}
\HChoice{i}{\mu} \VChoice{j}{\eta} C_{i j} 
=&\, \HChoice{i}{\mu} \left( \bigconc_{j} C_{i j} \right) & \text{(def. of visible choice)} \\ 
=&\, \sum_{i} \mu(i) \left( \bigconc_{j} \eta(j) C_{i j} \right) & \text{(def. of hidden choice)} \\ 
\equiv&\,\bigconc_{j} \eta(j) \left( \sum_{i} \eta(i) C_{i j} \right) & \text{(by Lemma~\ref{lemma:equivalent-channels-conditions})} \\ 
=&\, \VChoice{j}{\eta} \HChoice{i}{\mu} C_{i j} & \text{(def. of operators)}
\end{align*}
In the above derivation, we can apply 
Lemma~\ref{lemma:equivalent-channels-conditions}
because every column of the channel on each side of the equivalence 
can be written as a convex combination of the zero-column extension
of the channel on the other side.
\end{enumerate}

\end{proof}

\resconvexVq*

\begin{proof}
\begin{enumerate}[a)]
\item Let us call $\calx \times \caly \rightarrow \reals$ the 
type of each channel $C_{i}$ in the family $\{C_{i}\}$.
Then:
\begin{align*}
\postvf{\pi}{\HChoice{i}{\mu} C_{i}} 
=&\, \postvf{\pi}{\bigadd_{i} \mu(i)C_{i}} & \text{(def. of hidden choice)} \\ 
=&\, \sum_{y \in \caly} p(y) \cdot \priorvf{\frac{\pi(\cdot) \sum_{i}\mu(i)C_{i}(\cdot,y) }{p(y)}}  & \text{(def. of posterior $\vf$)}\\ 
=&\, \sum_{y \in \caly} p(y) \cdot \priorvf{\sum_{i}\mu(i)\frac{\pi(\cdot) C_{i}(\cdot,y) }{p(y)}}  & \text{}\\ 
\leq&\, \sum_{y \in \caly} p(y) \cdot \sum_{i} \mu(i) \priorvf{\frac{\pi(\cdot) C_{i}(\cdot,y) }{p(y)}}  & \text{(by convexity of $\vf$)}\\ 
=&\, \sum_{i} \mu(i) \sum_{y \in \caly} p(y) \priorvf{\frac{\pi(\cdot) C_{i}(\cdot,y) }{p(y)}}  & \text{}\\
=&\, \sum_{i} \mu(i) \postvf{\pi}{C_{i}}  & \text{}
\end{align*}
where $p(y) = \sum_{x\in\calx} \pi(x) \sum_{i} \mu(i)C_{i}(x,y)$.

\item Let us call $\calx \times \caly_{i} \rightarrow \reals$ the 
type of each channel $C_{i}$ in the family $\{C_{i}\}$.
Then:
\begin{align*}
\postvf{\pi}{\VChoice{i}{\mu} C_{i}}
=&\, \postvf{\pi}{\bigconc_{i} \mu(i)C_{i}} & \text{(def. of visible choice)} \\ 
=&\, \sum_{y \in \caly} p(y) \cdot \priorvf{\frac{\pi(\cdot) \bigconc_{i}\mu(i)C_{i}(\cdot,y) }{p(y)}}  & \text{(def. of posterior $\vf$)}\\ 
=&\, \sum_{y \in \caly} p(y) \cdot \priorvf{\bigconc_{i}\mu(i)\frac{\pi(\cdot) C_{i}(\cdot,y) }{p(y)}}  & \text{}\\ 
=&\, \sum_{y \in \caly} p(y) \cdot \sum_{i} \mu(i) \priorvf{\frac{\pi(\cdot) C_{i}(\cdot,y) }{p(y)}}  & \text{(see (*) below)}\\ 
=&\, \sum_{i} \mu(i) \sum_{y \in \caly} p(y) \priorvf{\frac{\pi(\cdot) C_{i}(\cdot,y) }{p(y)}}  & \text{}\\
=&\, \sum_{i} \mu(i) \postvf{\pi}{C_{i}}  & \text{}
\end{align*}
where $p(y) = \sum_{x\in\calx} \pi(x) \sum_{i} \mu(i)C_{i}(x,y)$, and 
step (*) holds because in the vulnerability of a concatenation of matrices
every column will contribute to the vulnerability in proportion to its weight
in the concatenation, and hence it is possible to break the vulnerability of a 
concatenated matrix as the weighted sum of the vulnerabilities of its sub-matrices.

\end{enumerate}
\end{proof}

\resconcaveconvexV*

\begin{proof}
To see that $U(\mu,\eta)$ is convex on $\mu$, note that:
\begin{align*}
U(\mu,\eta) 
=&\, \postvf{\pi}{\HChoice{i}{\mu} \; \VChoice{j}{\eta} \; C_{i j}} & \text{(by definition)} \\
\leq&\, \sum_{i} \mu(i) \; \postvf{\pi}{\VChoice{j}{\eta} \; C_{i j}} & \text{(by Theorem~\ref{theo:convex-V-q})}
\end{align*}

To see that $U(\mu,\eta)$ is linear on $\eta$, note that:
\begin{align*}
U(\mu,\eta) 
=&\, \postvf{\pi}{\HChoice{i}{\mu} \; \VChoice{j}{\eta} \; C_{i j}} & \text{(by definition)} \\
=&\, \postvf{\pi}{\VChoice{j}{\eta} \; \HChoice{i}{\mu} \; C_{i j}} & \text{(by Prop.~\ref{prop:reorganization-operators})} \\
=&\, \sum_{j} \eta(j) \; \postvf{\pi}{\HChoice{i}{\mu} \; C_{i j}} & \text{(by Theorem.~\ref{theo:convex-V-q})}
\end{align*}

\end{proof}

\newpage
\section{Properties of binary versions of channel operators}
\label{sec:operators-properties-binary}
In this section we derive some relevant properties of 
the binary versions of the hidden and visible choice 
operators.
We start with results regarding each operator
individually.

\begin{restatable}[Properties of the binary hidden choice]{Proposition}{respropertiesbinaryhiddenchoice}
\label{prop:properties-binary-hidden-choice}
For any channels $C_{1}$ and $C_{2}$ of the same type, 
and any values $0 \leq p,q \leq 1$, 
the binary hidden choice operator satisfies the following 
properties:
\begin{enumerate}[a)]
\item \emph{Idempotency}:
$C_{1} \hchoice{p} C_{1} = C_{1}$
\item \emph{Commutativity}: 
$C_{1} \hchoice{p} C_{2} = C_{2} \hchoice{\bar{p}} C_{1}$
\item \emph{Associativity}: 
$$C_{1} \hchoice{p} (C_{2} \hchoice{q} C_{3}) = ( \nicefrac{1}{q} \cdot C_{1} \hchoice{p} C_{2}) \hchoice{q} \bar{p} \cdot C_{3}$$ 
if $q \neq 0$.
\item \emph{Absorption}: 
$$(C_{1} \hchoice{p} C_{2}) \hchoice{q} (C_{1} \hchoice{r} C_{2})
= C_{1} \hchoice{(pq + \bar{q}r)} C_{2}.$$
\end{enumerate}
\end{restatable}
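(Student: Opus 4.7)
The plan is to prove each of the four identities by unfolding the definition $C_1 \hchoice{p} C_2 = p\,C_1 + \bar p\,C_2$ and reducing each side to a canonical linear combination of the $C_i$'s. Since all channels involved share a common type $\calx \times \caly \to \reals$, equality of such combinations reduces to equality of scalar affine expressions in the entries, which can be verified entrywise using only $p + \bar p = 1$ and elementary arithmetic. Unlike the visible-choice case, no appeal to Lemma~\ref{lemma:equivalent-channels-conditions} is needed here, because the hidden operator is a true convex combination and both sides of each identity live in the same ambient matrix space.

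For \textbf{idempotency} (a), $C_1 \hchoice{p} C_1 = p\,C_1 + \bar p\,C_1 = (p+\bar p)\,C_1 = C_1$, a one-line check. For \textbf{commutativity} (b), rewrite $p\,C_1 + \bar p\,C_2 = \bar p\,C_2 + p\,C_1$ and recognise the right-hand side as $C_2 \hchoice{\bar p} C_1$, using $\bar{\bar p} = p$.

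For \textbf{absorption} (d), I would expand both sides into the canonical form $\alpha\,C_1 + \beta\,C_2$. The left-hand side gives $(qp + \bar q r)\,C_1 + (q\bar p + \bar q \bar r)\,C_2$, while the right-hand side is $(pq + \bar q r)\,C_1 + (1 - pq - \bar q r)\,C_2$. The coefficients of $C_1$ agree immediately, and a short calculation using $\bar p = 1-p$, $\bar q = 1-q$, $\bar r = 1-r$ shows $q\bar p + \bar q \bar r = (1-q) + (q - qp) - r(1-q) = 1 - qp - \bar q r$, completing this case.

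The only step requiring genuine care — and the main obstacle I would flag — is \textbf{associativity} (c): the expressions $\nicefrac{1}{q} \cdot C_1$ and $\bar p \cdot C_3$ appearing on the right-hand side are not themselves stochastic matrices, so $\hchoice{\cdot}$ is being used here in the broader matrix-algebra sense of $p\,(\cdot) + \bar p\,(\cdot)$ rather than as an operator on bona fide channels. Under this reading, distributing scalars shows that both sides reduce to $p\,C_1 + \bar p q\,C_2 + \bar p \bar q\,C_3$, so the identity holds as a plain matrix equation. In the writeup I would state this convention explicitly, or alternatively reformulate (c) in the channel-preserving form $(C_1 \hchoice{p'} C_2) \hchoice{q'} C_3$ with $q' = 1 - \bar p \bar q$ and $p' = p/q'$ (for $q' \neq 0$), which keeps every intermediate object a genuine channel and avoids the slight abuse of notation.
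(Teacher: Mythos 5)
Your proof is correct and follows essentially the same route as the paper's: unfold the definition $C_1 \hchoice{p} C_2 = p\,C_1 + \bar p\,C_2$, reduce each side to a linear combination of the $C_i$'s, and match coefficients using $p+\bar p=1$ (the paper likewise just notes in the absorption case that $pq+\bar q r$ and $\bar p q + \bar q\bar r$ sum to $1$). Your observation that the stated associativity law applies $\hchoice{q}$ to the non-stochastic matrices $\nicefrac{1}{q}\cdot C_1$ and $\bar p\cdot C_3$, and hence must be read as a plain matrix identity, is a legitimate point that the paper's own proof silently glosses over; flagging it (or using your channel-preserving reformulation) is a genuine improvement in rigor, not a gap.
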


\begin{proof}
We will prove each property separately.

\begin{enumerate}[a)]
\item \emph{Idempotency}: 
\begin{align*}
C_{1} \hchoice{p} C_{1}
=&\, p \cdot C_{1} + \bar{p} \cdot C_{1} & \text{(def. of hidden choice)} \\
=&\, (p + \bar{p}) \cdot C_{1} & \text{} \\
=&\, C_{1} & \text{($p+\bar{p}=1$)}
\end{align*}

\item \emph{Commutativity}:
\begin{align*}
C_{1} \hchoice{p} C_{2}
=&\, p \cdot C_{1} + \bar{p} \cdot C_{2} & \text{(def. of hidden choice)} \\
=&\, \bar{p} \cdot C_{2} + p \cdot C_{1} & \text{}\\
=&\, C_{2} \hchoice{\bar{p}} C_{1} & \text{(def. of hidden choice)}
\end{align*}

\item \emph{Associativity}:
\begin{align*}
 &\, C_{1} \hchoice{p} ( C_{2} \hchoice{q} C_{3} ) \\
=&\, p \cdot C_{1} + \bar{p} ( q \cdot C_{2} + \bar{q} \cdot C_{3} ) & \text{(def. of hidden choice)} \\
=&\,p \cdot C_{1} + \bar{p}q \cdot C_{2} + \bar{p}\bar{q} \cdot C_{3} & \text{} \\
=&\,q(p (\nicefrac{1}{q} \cdot C_{1}) + \bar{p} \cdot C_{2}) + \bar{q}(\bar{p} \cdot C_{3}) & \text{} \\
=&\,(\nicefrac{1}{q} \cdot C_{1} \hchoice{p} C_{2}) \hchoice{q} \bar{p} \cdot C_{3} & \text{(def. of hidden choice)} \\
\end{align*}

\item \emph{Absorption}: First note that:
\begin{align*}
 &\,(C_{1} \hchoice{p} C_{2}) \hchoice{q} (C_{1} \hchoice{r} C_{2}) \\
=&\,q (p C_{1} + \bar{p} C_{2}) + \bar{q} (r C_{1} + \bar{r} C_{2}) & \text{(def. of hidden choice)} \\
=&\,pq C_{1} + \bar{p}q C_{2} + \bar{q}r C_{1} + \bar{q}\bar{r} C_{2} \\
=&\,(pq + \bar{q}r) C_{1} + (\bar{p}q + \bar{q}\bar{r}) C_{2} \\
=&\,C_{1} \hchoice{(pq + \bar{q}r)} C_{2} & \text{(*)}
\end{align*}
To complete the proof, note that in step (*) above,
$pq + \bar{q}r$ and $\bar{p}q + \bar{q}\bar{r}$ form
a valid binary probability distribution (they are both 
non-negative, and they add up to 1), then apply the
definition of hidden choice.

\end{enumerate}
\end{proof}

\begin{restatable}[Properties of binary visible choice]{Proposition}{respropertiesbinaryvisiblechoice}
\label{prop:properties-binary-visible-choice}
For any compatible channels $C_{1}$ and $C_{2}$, 
and any values $0 \leq p,q \leq 1$, 
the visible choice operator satisfies the following 
properties:
\begin{enumerate}[a)]
\item \emph{Idempotency}: 
$C_{1} \vchoice{p} C_{1} \equiv C_{1}$
\item \emph{Commutativity}: 
$C_{1} \vchoice{p} C_{2} \equiv C_{2} \vchoice{\bar{p}} C_{1}$
\item \emph{Associativity}: 
$C_{1} \vchoice{p} (C_{2} \vchoice{q} C_{3}) \equiv ( \nicefrac{1}{q} \cdot C_{1} \vchoice{p} C_{2}) 
\vchoice{q} \bar{p} \cdot C_{3}$
if $q \neq 0$.
\end{enumerate}
\end{restatable}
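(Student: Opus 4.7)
\emph{Plan.} The approach parallels the proofs of the multi-ary analogues in Propositions~\ref{prop:idempotency} and~\ref{prop:reorganization-operators}, relying crucially on Lemma~\ref{lemma:equivalent-channels-conditions} (the characterization of channel equivalence). For each of the three properties I would first unfold the binary $\vchoice{\cdot}$ operators into concatenations of scaled channels using the definition $C_1 \vchoice{p} C_2 = p\,C_1 \conc \bar p\,C_2$, and then verify the two column-matching conditions of Lemma~\ref{lemma:equivalent-channels-conditions}, invoking zero-column extensions where needed.

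For part (a), I would observe that $C_1 \vchoice{p} C_1 = p\,C_1 \conc \bar p\,C_1$, so each $y$-column of $C_1$ gives rise to two sibling columns, namely $p\,C_1(\cdot,y)$ and $\bar p\,C_1(\cdot,y)$, on the right-hand side. In one direction of Lemma~\ref{lemma:equivalent-channels-conditions}, the scaled column $p\,C_1(\cdot,y)$ in $C_1 \vchoice{p} C_1$ is expressed as a convex combination of the corresponding column of $C_1^{0}$ with the zero column (coefficients $p$ and $\bar p$); analogously for $\bar p\,C_1(\cdot,y)$. In the other direction, each column $C_1(\cdot,y)$ is recovered by combining its two scaled copies in $(C_1 \vchoice{p} C_1)^{0}$, again padding with the zero column to make the coefficients convex.

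For part (b), both $C_1 \vchoice{p} C_2$ and $C_2 \vchoice{\bar p} C_1$ unfold to concatenations with the same multiset of scaled columns $\{p\,C_1(\cdot,y)\}_{y\in\caly_1} \cup \{\bar p\,C_2(\cdot,y)\}_{y\in\caly_2}$, differing only in column order and tagging. The identity correspondence between columns on the two sides trivially satisfies both conditions of Lemma~\ref{lemma:equivalent-channels-conditions}. For part (c), I would expand both sides algebraically: the left side becomes $p\,C_1 \conc \bar p\,(q\,C_2 \conc \bar q\,C_3) = p\,C_1 \conc \bar p q\,C_2 \conc \bar p \bar q\,C_3$, and the right side unfolds as $q\,(\nicefrac{1}{q}\,C_1 \vchoice{p} C_2) \conc \bar q\,\bar p\,C_3 = q\,(p \nicefrac{1}{q}\,C_1 \conc \bar p\,C_2) \conc \bar q \bar p\,C_3 = p\,C_1 \conc q\bar p\,C_2 \conc \bar q \bar p\,C_3$, matching the left side up to column reordering, so Lemma~\ref{lemma:equivalent-channels-conditions} applies as in part (b).

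The main obstacle, if any, lies in part (a): unlike the hidden-choice case, $C_1 \vchoice{p} C_1$ is not literally equal to $C_1$ because of the column tagging, so the equivalence must be justified carefully through the zero-column extension; one must be explicit about representing a scaled column $p\,C_1(\cdot,y)$ as a genuine convex (not merely non-negative) combination of $C_1(\cdot,y)$ and $\vec{0}$. Once this is set up, the remaining verifications are largely routine algebraic manipulations of the scalar coefficients $p, \bar p, q, \bar q, \nicefrac{1}{q}$, with the $q \neq 0$ hypothesis of part (c) exactly what is needed to make $\nicefrac{1}{q}\,C_1$ well-defined.
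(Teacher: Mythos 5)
Your overall route is the paper's: unfold the binary operators via $C_{1} \vchoice{p} C_{2} = p\,C_{1} \conc \bar{p}\,C_{2}$, push the scalars through the concatenations, and discharge each $\equiv$ step by Lemma~\ref{lemma:equivalent-channels-conditions}. Parts (b) and (c) match the paper's computation essentially line for line (the paper likewise reduces both sides to the same collection of scaled columns up to reordering and re-tagging), and your remark that $q \neq 0$ is needed exactly so that $\nicefrac{1}{q} \cdot C_{1}$ makes sense is correct.

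The one step that does not survive scrutiny is the second direction of part (a). To apply Lemma~\ref{lemma:equivalent-channels-conditions} you must write $C_{1}(\cdot,y)$ as a \emph{convex} combination of columns of $(C_{1}\vchoice{p}C_{1})^{0}$. Take $C_{1}$ the $2\times 2$ identity channel and $0<p<1$: every column of $(C_{1}\vchoice{p}C_{1})^{0}$ has entry-sum $p$, $\bar{p}$, or $0$, so any convex combination of them has entry-sum at most $\max(p,\bar{p})<1$, and can never equal a column of $C_{1}$ of sum $1$. Padding with the zero column only \emph{loses} mass; it cannot recover it, so the reconstruction you describe (coefficients on the two sibling columns plus the zero column) is impossible with convex coefficients. (To be fair, the paper's own proof of (a) glosses over exactly the same point, and this is among the imprecisions the authors flag as corrected in the full version.) The clean repair is to invoke the characterization of channel equivalence underlying the lemma in its robust form: proportional columns may be merged by summation without changing the induced posteriors and their weights, and the sibling columns $p\,C_{1}(\cdot,y)$ and $\bar{p}\,C_{1}(\cdot,y)$ are proportional with sum $C_{1}(\cdot,y)$, so both sides have the same reduced matrix; equivalently, each side is a post-processing of the other by a stochastic matrix. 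With that adjustment the rest of your argument goes through as written.
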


\begin{proof}

We will prove each property separately.

\begin{enumerate}[a)]
\item \emph{Idempotency}: $C_{1} \vchoice{p} C_{1} \equiv C_{1}$, by 
immediate application of Lemma~\ref{lemma:equivalent-channels-conditions},
since every column of the channel on each side of the equivalence 
can be written as a convex combination of the zero-column extension
of the channel on the other side.

\item \emph{Commutativity}:
\begin{align*}
C_{1} \vchoice{p} C_{2} 
=&\, p \cdot C_{1} \conc \bar{p} \cdot C_{2} & \text{(def. of visible choice)} \\
\equiv&\ \bar{p} \cdot C_{2} \conc p \cdot C_{1} & \text{(by Lemma~\ref{lemma:equivalent-channels-conditions})} \\
=&\ C_{2} \vchoice{\bar{p}} C_{1} & \text{(def. of visible choice).} \\
\end{align*}
In the above derivation, we can apply 
Lemma~\ref{lemma:equivalent-channels-conditions}
because every column of the channel on each side of the equivalence 
can be written as a convex combination of the zero-column extension
of the channel on the other side.

\item \emph{Associativity}:
\begin{align*}
C_{1} \vchoice{p} ( C_{2} \vchoice{q} C_{3} ) 
=&\, p \cdot C_{1} \conc \bar{p} ( q \cdot C_{2} \conc \bar{q} \cdot C_{3} ) & \text{(def. of visible choice)} \\
\equiv&\, p \cdot C_{1} \conc \left( \bar{p}q \cdot C_{2} \conc \bar{p}\bar{q} \cdot C_{3} \right) & \text{(by Lemma~\ref{lemma:equivalent-channels-conditions})} \\
=&\, q(p (\nicefrac{1}{q} \cdot C_{1}) \conc \bar{p} \cdot C_{2}) \conc \bar{q}(\bar{p} \cdot C_{3}) & \text{} \\
=&\,(\nicefrac{1}{q} \cdot C_{1} \vchoice{p} C_{2}) \vchoice{q} \bar{p} \cdot C_{3} & \text{(def. of visible choice)}
\end{align*}
In the above derivation, we can apply 
Lemma~\ref{lemma:equivalent-channels-conditions}
because every column of the channel on each side of the equivalence 
can be written as a convex combination of the zero-column extension
of the channel on the other side.

\end{enumerate}
\end{proof}

Now we turn our attention to the interaction between
hidden and visible choice operators.

A first result is that hidden choice does not distribute 
over visible choice.
To see why, note that $C_{1} \hchoice{p} (C_{2} \vchoice{q} C_{3})$
and $(C_{1} \hchoice{p} C_{2}) \vchoice{q} (C_{1} \hchoice{p} C_{3})$
cannot be both defined: if the former is defined, then $C_{1}$ must have
the same type as $C_{2} \vchoice{q} C_{3}$, whereas if the
latter is defined, $C_{1}$ must have the same type as $C_{2}$, but
$C_{2} \vchoice{q} C_{3}$ and $C_{2}$ do not have the same type
(they have different output sets).

However, as the next result shows, visible choice distributes over 
hidden choice.

\begin{restatable}[Distributivity of $\vchoice{p}$ over $\hchoice{q}$]{Proposition}{resdistributivity}
\label{prop:distributivity}
Let $C_{1}$, $C_{2}$ and $C_{3}$ be compatible channels,
and $C_{2}$ and $C_{3}$ have the same type.
Then, for any values $0 \leq p,q \leq 1$:
\begin{align*}
C_{1} \vchoice{p} (C_{2} \hchoice{q} C_{3}) \equiv (C_{1} \vchoice{p} C_{2}) \hchoice{q} (C_{1} \vchoice{p} C_{3}).
\end{align*}
\end{restatable}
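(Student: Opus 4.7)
The plan is to unfold both sides of the equivalence using the definitions of $\vchoice{p}$, $\hchoice{q}$, and the underlying matrix operations, and then compare the resulting channels column by column. Under the hypotheses, $C_1$ has some type $\calx \times \caly_1 \rightarrow \reals$, while $C_2$ and $C_3$ share a common type $\calx \times \caly_2 \rightarrow \reals$; it is immediate to check that both the LHS and the RHS are channels of type $\calx \times (\caly_1 \sqcup \caly_2) \rightarrow \reals$, so that a direct columnwise comparison is possible (and suffices to invoke Lemma \ref{lemma:equivalent-channels-conditions} in its trivial case).

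On the LHS I first compute the inner hidden choice as $C_2 \hchoice{q} C_3 = q\,C_2 + \bar{q}\,C_3$, and then form the visible choice $p\,C_1 \conc \bar{p}\,(q\,C_2 + \bar{q}\,C_3)$. By definition of concatenation, the entry on a column tagged $(y,1)$ with $y \in \caly_1$ is $p\,C_1(x,y)$, while the entry on a column tagged $(y,2)$ with $y \in \caly_2$ is $\bar{p}(q\,C_2(x,y) + \bar{q}\,C_3(x,y))$. On the RHS I expand each visible choice separately to obtain $p\,C_1 \conc \bar{p}\,C_2$ and $p\,C_1 \conc \bar{p}\,C_3$, then take their $q$-weighted sum. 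Both summands are matrices of the same type $\calx \times (\caly_1 \sqcup \caly_2) \rightarrow \reals$, so the sum is taken componentwise: on the $(y,1)$-columns with $y \in \caly_1$ the entries collapse to $(qp + \bar{q}p)\,C_1(x,y) = p\,C_1(x,y)$, and on the $(y,2)$-columns with $y \in \caly_2$ they simplify to $q\bar{p}\,C_2(x,y) + \bar{q}\bar{p}\,C_3(x,y) = \bar{p}(q\,C_2(x,y) + \bar{q}\,C_3(x,y))$. These are exactly the LHS entries, so the two channels coincide as matrices, and in particular $\equiv$-equivalent in the sense of Definition \ref{def:equivalence-channels}.

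The only delicate point---hardly an obstacle---is tracking the tags produced by concatenation. On the RHS, the first ``slot'' of both $C_1 \vchoice{p} C_2$ and $C_1 \vchoice{p} C_3$ is the $C_1$-block, so when the outer hidden choice sums the two matrices these two $C_1$-blocks align and recombine, via $q + \bar{q} = 1$, into a single $p\,C_1$ block, matching the single $C_1$-block produced by the outer visible choice on the LHS. Once this bookkeeping is made explicit, the identity is just distributivity of scalar multiplication over matrix addition and concatenation, and no appeal to convexity of $\vf$ or to the structural Lemma \ref{lemma:equivalent-channels-conditions} beyond its degenerate use is required.
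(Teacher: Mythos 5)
Your proof is correct. The paper establishes the same identity by a chain of algebraic rewrites: it first replaces $C_{1}$ by $C_{1} \hchoice{q} C_{1}$ (idempotency of hidden choice), expands both operators to get $(pq\,C_{1} + p\bar{q}\,C_{1}) \conc (\bar{p}q\,C_{2} + \bar{p}\bar{q}\,C_{3})$, and then invokes the characterization of channel equivalence (Lemma~\ref{lemma:equivalent-channels-conditions}) to pass to $(pq\,C_{1} \conc \bar{p}q\,C_{2}) + (p\bar{q}\,C_{1} \conc \bar{p}\bar{q}\,C_{3})$ before regrouping. Your columnwise computation is essentially the same calculation unfolded, but it buys a slightly sharper conclusion: since the binary concatenation tags the first operand's columns identically in $C_{1} \vchoice{p} C_{2}$ and $C_{1} \vchoice{p} C_{3}$, and $C_{2}$, $C_{3}$ share an output set by hypothesis, the two sides are \emph{equal as matrices}, not merely equivalent, so no appeal to Lemma~\ref{lemma:equivalent-channels-conditions} (nor to Definition~\ref{def:equivalence-channels} beyond reflexivity) is actually needed; the step the paper marks with $\equiv$ is an outright equality under the stated tagging convention. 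Your explicit checks---that both sides have type $\calx \times (\caly_{1} \sqcup \caly_{2}) \rightarrow \reals$, and that the two $C_{1}$-blocks align so the outer sum recombines them via $q + \bar{q} = 1$---are exactly the bookkeeping the paper leaves implicit, and they are the only points where the argument could have gone wrong.
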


\begin{proof}
\begin{align*}
 &\, C_{1} \vchoice{p} (C_{2} \hchoice{q} C_{3}) \\
=&\, (C_{1} \hchoice{q} C_{1}) \vchoice{p} (C_{2} \hchoice{q} C_{3}) & \text{(idempotency of visible choice)} \\
=&\, p (q \cdot C_{1} + \bar{q} \cdot C_{1}) \conc \bar{p}(q \cdot C_{2} + \bar{q} \cdot C_{3}) & \text{(def. of operators)} \\
=&\, (pq \cdot C_{1} + p\bar{q} \cdot C_{1}) \conc (\bar{p}q \cdot C_{2} + \bar{p}\bar{q} \cdot C_{3}) & \text{} \\
\equiv&\,  (pq \cdot C_{1} \conc \bar{p}q \cdot C_{2}) + (p\bar{q} \cdot C_{1} \conc \bar{p}\bar{q} \cdot C_{3}) & \text{(by Lemma~\ref{lemma:equivalent-channels-conditions})} \\
=&\, q(p \cdot C_{1} \conc \bar{p} \cdot C_{2}) + \bar{q}(p \cdot C_{1} \conc \bar{p} \cdot C_{3}) & \text{} \\
=&\, q(C_{1} \vchoice{p} C_{2}) + \bar{q}(C_{1} \vchoice{p} C_{3})  & \text{(def. of visible choice)} \\
=&\, (C_{1} \vchoice{p} C_{2}) \hchoice{q} (C_{1} \vchoice{p} C_{3})  & \text{(def. of hidden choice)}
\end{align*}
In the above derivation, we can apply 
Lemma~\ref{lemma:equivalent-channels-conditions}
because every column of the channel on each side of the equivalence 
can be written as a convex combination of the zero-column extension
of the channel on the other side.
\end{proof}

\end{document}